
\documentclass[10pt,b5paper,twoside, headrule]{amsart}

\usepackage{amsfonts, amsmath, amssymb,latexsym}
\usepackage{epsfig}
\usepackage[curve]{xy}
\usepackage{algorithmic}
\usepackage{algorithm}
\usepackage{enumerate}
\usepackage{framed}
\usepackage{hyperref}
\usepackage{mathtools}
\usepackage{lmodern}
\usepackage[T1]{fontenc}

\usepackage{chngcntr}


\headsep=1truecm \headheight=0pt \topmargin=0pt \oddsidemargin=40pt
\evensidemargin=25pt \textwidth=13.5truecm \textheight=19.5truecm

\footskip=10mm\parskip 0.2cm\addtocounter{page}{0}
\setlength{\arraycolsep}{1pt}

\newtheorem{teo}{Theorem}[section]
\newtheorem{lm}[teo]{Lemma}
\newtheorem{prop}[teo]{Proposition}
\newtheorem{cor}[teo]{Corollary}

\newtheorem{ex}[teo]{Example}

\newtheorem{rk}[teo]{Remark}
\theoremstyle{definition}
\newtheorem{df}[teo]{Definition}
\newtheorem*{df*}{Definition}
\newtheorem{nt}[teo]{Notation}

\newtheorem{carg*}{Counterargument}

\theoremstyle{remark}

\newtheorem*{quo*}{Question}

\newcommand{\GR}{\operatorname{GR}}
\newcommand{\cC}{\mathcal{C}}
\newcommand{\CS}{\operatorname{CS}}
\newcommand{\RS}{\operatorname{RS}}
\newcommand{\rank}{\operatorname{rank}}

\newcommand{\binomii}[2]{\genfrac{[}{]}{0pt}{0}{#1}{#2}}
\newcommand{\stirling}[2]{\genfrac{\{}{\}}{0pt}{}{#1}{#2}}
\newcommand{\Mat}{\operatorname{Mat}}
\newcommand{\tr}{\operatorname{Tr}}
\newcommand{\lker}{\operatorname{lker}}
\newcommand{\rker}{\operatorname{rker}}

\def\abs#1{\left\lvert #1 \right\rvert}


\newcommand{\Hom}{\operatorname{Hom}}

\title[MacWilliams duality for rank metric codes]{MacWilliams duality for rank metric codes over finite chain rings}

\begin{document}

\author[I.\,Blanco-Chac\'on]{Iv\'an Blanco-Chac\'on}
\address{Department of Mathematics, Schools of Science, Universidad de Alcal\'a de Henares, Ctra. Madrid-Barcelona Km. 33,600, Alcal\'a de Henares, Spain.}
\email{ivan.blancoc@uah.es}

\author[A.\,F.\,Boix]{Alberto F.\,Boix}
\address{IMUVA--Mathematics Research Institute, Universidad de Valladolid, Paseo de Bel\'en, s/n, 47011, Valladolid, Spain.}
\email{alberto.fernandez.boix@uva.es}

\author[M.\,Greferath]{Marcus Greferath}
\address{School of Mathematics and Statistics, University College Dublin,  D04 V1W8, Dublin 4, Ireland}
\email{marcus.greferath@ucd.ie}

\author[E.\,Hieta-aho]{Erik Hieta--Aho}
\address{VTT Technical Research Centre of Finland Ltd., Visiokatu 4, 33101 Tampere, Finland.\\
Department of Mathematics and Systems Analysis, Aalto University, Espoo, Finland.}
\email{erik.hieta-aho@vtt.fi}


\begin{abstract}
We extend Ravagnani's MacWilliams duality theory to the settings of rank metric codes over finite chain rings, relating the sequences of $q$-binomial moments of a rank metric code over this class of rings with those of its dual.
\end{abstract}

\keywords{Rank metric code, MacWilliams identity, binomial moments}

\subjclass[2020]{Primary 11T71, 94B05}

\maketitle

\section{Introduction}
MacWilliams identities relate the weight distribution of a code with that of the corresponding dual code under some inner product. MacWilliams original identity was established in \cite{mcwilliams}, applies to linear error correcting codes over finite fields with the usual Hamming metric, and was soon generalised to non-linear codes over finite fields in \cite{mcwilliams2} under the complete and the Lee metrics.

Further on, Delsarte introduced rank-metric codes in \cite{delsarte}. These codes are linear subspaces of matrices over finite fields, where the distance between two matrices is defined as the rank of their difference. Delsarte's approach interpretes rank-metric codes as association schemes, for which the MacWilliams transform of the weight enumerator is defined in terms of the adjacency algebra of the scheme.

Next, in \cite{gabidulin} Gabidulin proposed a slightly different definition of rank-metric code, in which the codewords are vectors over a finite extension of a finite field (where each coordinate can be replaced by the column vector of its coefficients over the base field obtaining hence a matrix and thus leading to a code in Delsarte's sense). 

However, it was not until 2007 when MacWilliams identities were proved for Gabidulin-like rank-metric codes by Gadouleau and Yan  in \cite{GadouleauYanwithproofs} and in full generality, namely, for Delsarte codes, by Ravagnani in \cite{ravagnani}. Gadouleau and Yan's main contribution is a MacWilliams identity in the form of a closed MacWilliams transform for the original code where a strong use of the $q$-product and $q$-derivative is made, whilst Ravagnani's result exploits the perfect pairing character of the trace bilinear form and several combinatorial properties of the strict shortening operator to obtain a sequence of identities which relates the $q$-binomial moments of the primal and dual codes. 

The interest in linear error correcting codes over finite fields with the Hamming metric is well established, as well as the pursue for codes with the best possible trade-off between rates, minimal distance and asymptotic bounds such as Gilbert-Varshamov's. Regarding rank-metric codes, their most promising application is in the setting of network coding for distributed storage. For instance, they can be used for error detection and correction in cohererent and non-coherent linear network coding under different sets of conditions (\cite{silva1}), as well as in the setting of secure network coding against an eavesdropper (\cite{silva2}).

Possibly, one of the best justifications to study codes over more general alphabets, apart from its intrinsic theoretical interest, is to try to determine, or at least to understand, the maximum size of a code of fixed length over an alphabet of fixed size, such that the minimum pairwise distance between different codewords is lower bounded by a fixed bound. Less ambitious is the seek for more competitive code rates within a given error-correction capability or with just as many codewords \emph{as possible}, or to exploit the natural properties of the Gray mapping between certain types of physical-layer constellations and some binary finite rings. Such non-standard alphabets include number fields and cyclic division algebras \cite{oggierviterbo,hollanti,sethuraman}, orbits of arithmetic Fuchsian groups \cite{bhar1,bhar2} or finite rings, the topic of the present work.

Error-correcting codes over finite rings were first introduced by Preparata \cite{preparata} and Kerdock \cite{kerdock}, and a MacWilliams identity was proved for this family of dual codes. Both codes are non-linear over $\mathbb{F}_2$ but linear over $\mathbb{Z}_4$ with the Lee metric and have been vastly generalised to other families of rings, most prominently for finite chain rings or more in general, to finite Frobenius rings (\cite{HonoldLandjevMacwilliamsidentities}). The reader is referred to \cite{thesis} for a complete study of how the MacWilliams identity generalizes to codes over Frobenius rings under the Hamming metric. Unfortunately, as it is also discussed in loc. cit., only very few families beyond Kerdock-Preparata codes satisfy the identity under the Lee metric. 

It is natural, at least from a theoretical point of view, to investigate similar results for rank-metric codes over finite rings, where one of the first non-trivial problems is to give a satisfactory definition, which comprises, moreover, to choose a suitable definition of the rank. If one is restricted to the class of free modules over finite chain rings, it is still possible to obtain a coherent theory where many features of the rank-metric codes over finite fields still hold (see for instance \cite{rankmetriccodesoverpir}). Regarding \emph{real-world} potential applications of these codes, recent research on nested-lattice-based network coding allows to construct more efficient network coding schemes using rank-metric codes over finite principal ideal rings \cite{gorla,kiran}, for which, up to date it has not been established a MacWilliams identity for the rank-metric weight enumerator. This is the contribution of our present work, which is organized as follows:

In Section \ref{section: preliminaries} we provide an account of definitions and facts on the notions and facts of commutative algebra that we will use along our work. In particular we recall the definitions of chain rings and Galois rings as well as of codes over these, which for us will mean free submodules of a free module of finite rank. We set the notion of rank we will work with, recall the minimal rank distance and the Singleton inequality for finite chain rings.

In Section \ref{section: coefficient MacWilliams} we generalise Ravagnani's duality theory for rank metric codes over finite fields (\cite{ravagnani}) to the setting of rank metric codes over finite chain rings. In particular, thanks to Theorem \ref{the inequality of cardinals holds for chain rings}, we can handle the trace form as in \cite{ravagnani} to prove Lemmas \ref{row space can be viewed as a shortening} and \ref{Lemma 28 ravagnani: correct proof}, with which we obtain Theorem \ref{our MacWilliams identity}, the main result of the section, which relates the binomial moments of the rank distributions of a code over a finite chain ring and its dual.

Along this work, all rings are supposed to be commutative except, of course, the rings of matrices.

\textbf{Acknowledgement.} This work was done in part while I. Blanco-Chac\'on and M. Greferath were visiting professors at the ANTA group at the Department of Mathematics and Systems Analysis, Aalto University, Finland and in part while E. Hieta-aho was a postdoctoral researcher in the same department. The department and the Aalto Science Institute are gratefully acknowledged for their support. I. Blanco-Chac\'on and M. Greferath are partially supported by the Research Council of Finland (project \#351271, PI Camilla Hollanti). I. Blanco-Chac\'on is also partially supported by Spanish Ministerio de Ciencia e Innovaci\'on grant PID2022-136944NB-I00. E. Hieta-aho was partially supported by Research Council of Finland, grant \#336005 (PI Camilla Hollanti). A. F. Boix is partially supported by Spanish Ministerio de Ciencia e Innovaci\'on grant PID2022-137283NB-C22

\section{Preliminaries on modules over finite chain rings}\label{section: preliminaries}

In this section we collect and discuss the basic facts we will use through the paper. We start by introducing the notion of finite chain ring for the convenience of the reader.

\begin{df}[Finite chain rings] A finite chain ring of parameters $(q,s)$ is a finite local ring $R$ with maximal ideal $\mathfrak{m}=R\theta$ for some $\theta\in R$ such that $R/\mathfrak{m}=\mathbb{F}_q$ and such that $R\supset R\theta \supset  \dots \supset R\theta^{s-1} \supset R\theta^{s}={0}$.
\end{df}

An important particular class of chain rings are Galois rings. The interested reader may like to consult \cite[Chapter 3]{Galoisringsbook} for the basics about these rings.

\begin{df}[Galois Ring]\label{Galois ring: definition} A Galois Ring is a ring isomorphic to $\GR(p^s,k):=\mathbb{Z}[x]/(p^s,f(x))$ where $f\in\mathbb{Z}[x]$ is a basic polynomial of degree $k$. By \textit{basic} we mean monic and irreducible over $\mathbb{F}_p[x]$.
\end{df}

In particular, a Galois ring $R$ is a finite chain ring, its unique maximal ideal is $\mathfrak{m}=pR$ and its residue field $\mathbb{F}_{p^k}$.  From now on we will set $q=p^k$ (the cardinality of the residual field) and $\overline{q}=p^s$ (the characteristic of the ring). Likewise we denote $R_1:=\GR(\overline{q},1)$.

By Wilson structure theorems \cite[Theorems A and B]{Wilsonfiniterings}, any finite local ring of characteristic $\overline{q}$ and finite residue field of $q=p^k$ elements contain $\GR (\overline{q},k)$ as a subring. In other words, any  finite local ring may be regarded as a module over a Galois ring. In the case of finite chain rings, more can be said; indeed, it is known that any finite chain ring can be expressed as homomorphic image of a polynomial ring in one variable with coefficients on a Galois ring, see \cite[Theorem XVII.5]{McDonaldFiniteRingsBook}.

\begin{df}[Code over a finite chain ring] Let $R$ be a finite chain ring. A code over $R$ is a free submodule of $M_{m,n}(R)$ for some $m,n\geq 1$.
\end{df}

In what follows, we will denote by $R$ a finite chain ring and given a matrix $A\in M_{m,n}(R)$ we will denote by $\CS (A)$ (respectively, $\RS (A)$) the $R$--submodule of $R^m$ (respectively, $R^n$) generated by the columns (respectively, the rows) of $A.$ Finally, given $M$ a finitely generated $R$--module, we will denote by $\mu_R(M)$ the cardinality of a minimal generating set of $M$ as an $R$--module.

\begin{df}[The rank metric] \cite[Definition 3.3]{rankmetriccodesoverpir}
Let $R$ be a finite chain ring, and let $A\in M_{m,n}(R).$ The \textit{rank} of $A$ is defined as $\rank(A):=\mu_R(\CS(A))$. Let $\cC\subset M_{m,n}(R)$ be a code. For $A,B\in\cC$, the rank distance between $A$ and $B$ is defined $d(A,B)=\rank(A-B)$.
\label{rango1}
\end{df}

\begin{ex}
Consider the Galois ring $R:=\GR(4,3)=\mathbb{Z}[x]/(4,x^3+x+1)$ and consider the code $\mathcal{C}\subseteq\mathrm{M}_{2,2}(R)$ generated by the matrices $\left(\begin{array}{cc}1 & 0\\0 & 0\end{array}\right)$ and $\left(\begin{array}{cc}0 & 0\\0 & 1\end{array}\right)$, namely
$$
\mathcal{C}=\left\{ \left(\begin{array}{cc}a & 0\\0 & b\end{array}\right), a,b\in R\right\}.
$$
This is a free code of rank $2$ over $R$. From the very definition, for a matrix of the form $A=\left(\begin{array}{cc}a & 0\\0 & b\end{array}\right)\in\mathcal{C}$, $\rank(A)=2$ if and only if $a,b\neq 0$, whilst if either $a=0$ or $b=0$ (but not both), then $\rank(A)=1$. In particular, for $a=b=2$ the rank is $2$. Notice that, however, the matrix is not linearly independent, as the entries are zero divisors.
\label{example after defining rank metric}
\end{ex}

It is not complicated to check that the rank distance satisfies the usual axioms and hence, the pair $(\cC, d)$ is a metric space, to which we will refer as a rank metric code over the Galois ring $R$.

For matrices with coefficients on a chain ring, we have the following result:

\begin{prop}\cite[Proposition 3.4 and Corollary 3.5]{rankmetriccodesoverpir}.\label{over a chain ring the rank is as in vector spaces}
Let $R$ be a chain ring, and let $A\in M_{m,n} (R).$ Then, there is an $R$--module isomorphism $\CS(A)\cong\RS(A).$ Moreover, we have $\rank(A)=\mu_R(\RS(A)).$
\end{prop}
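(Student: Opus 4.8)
The plan is to reduce the statement to the case of a diagonal matrix via a Smith-type normal form over $R$, and then to read off both assertions directly from the resulting explicit module structure.

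First I would exploit that a finite chain ring $R$ is a local principal ideal ring whose ideals form the chain $R\supset R\theta\supset\cdots\supset R\theta^{s}=0$; in particular, among any finite collection of nonzero elements of $R$ one of them (any one generating the largest ideal in the collection) divides all the others. A routine induction on $\min(m,n)$ --- at each stage bringing an entry of least $\theta$-adic valuation to the $(1,1)$ slot by transpositions of rows and columns, and using that it divides every other entry to clear the first row and first column --- produces $P\in\GL_m(R)$ and $Q\in\GL_n(R)$ with $PAQ=D$, where $D$ has diagonal entries $\theta^{a_1},\dots,\theta^{a_r}$ for suitable $0\le a_1\le\cdots\le a_r<s$ (with $r\ge 0$) and zeros elsewhere; alternatively one may simply quote the existence of the Smith normal form over finite chain rings, e.g.\ from \cite{McDonaldFiniteRingsBook}.

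Next I would check that this reduction changes neither $\CS$ nor $\RS$ up to isomorphism. Right multiplication by the invertible $Q$ leaves the column space unchanged, $\CS(AQ)=\CS(A)$, while left multiplication by $P$ gives an $R$-linear isomorphism $\CS(A)\xrightarrow{\ \sim\ }P\cdot\CS(A)=\CS(PA)=\CS(PAQ)=\CS(D)$, $v\mapsto Pv$; symmetrically $\RS(PA)=\RS(A)$ and $w\mapsto wQ$ is an isomorphism $\RS(A)\xrightarrow{\ \sim\ }\RS(AQ)=\RS(D)$. So it suffices to treat $D$, for which one has $\CS(D)=\bigoplus_{i=1}^{r}\theta^{a_i}R\subseteq R^m$ and $\RS(D)=\bigoplus_{i=1}^{r}\theta^{a_i}R\subseteq R^n$; since the annihilator of $\theta^{a_i}$ is $(\theta^{s-a_i})$, both of these modules are isomorphic to $\bigoplus_{i=1}^{r}R/(\theta^{s-a_i})$. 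This yields $\CS(D)\cong\RS(D)$, hence $\CS(A)\cong\RS(A)$. Finally, $\mu_R(-)$ is an isomorphism invariant --- by Nakayama's lemma $\mu_R(M)=\dim_{\mathbb{F}_q}(M/\mathfrak{m}M)$ for every finitely generated $R$-module $M$ --- so applying it to the isomorphism just obtained gives $\rank(A)=\mu_R(\CS(A))=\mu_R(\RS(A))$; in fact both equal $r$, since each $s-a_i\ge 1$.

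The only genuinely nontrivial ingredient is the diagonal reduction above, i.e.\ the fact that a finite chain ring is an elementary divisor ring, and I expect that to be the main obstacle in a self-contained write-up; however it follows without difficulty from the totally ordered lattice of ideals of $R$, which is exactly what makes the clearing step of the induction go through. Everything after that is bookkeeping with direct sums of cyclic modules and one application of Nakayama's lemma.
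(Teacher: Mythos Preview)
The paper does not supply its own proof of this proposition; it is quoted verbatim from \cite[Proposition 3.4 and Corollary 3.5]{rankmetriccodesoverpir} without argument. Your proof is correct: the Smith normal form over a finite chain ring (which you justify via the total order on the ideal lattice, and which is also available in \cite{McDonaldFiniteRingsBook} as you note) reduces the claim to a diagonal matrix, for which $\CS(D)$ and $\RS(D)$ are manifestly isomorphic to the same direct sum $\bigoplus_{i=1}^{r}R/(\theta^{s-a_i})$ of cyclic modules, and one application of Nakayama's lemma identifies $\mu_R$ with the residue-field dimension of $M/\mathfrak{m}M$, giving $\mu_R(\CS(A))=\mu_R(\RS(A))=r$. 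This is exactly the route taken in the cited source, which in fact works over arbitrary finite principal ideal rings and uses the Smith normal form in the same way; so your write-up is a faithful reconstruction of the argument the paper is invoking rather than an alternative to it.
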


We recall next some basic facts from the theory of matrices over rings.

\begin{df}\label{equivalence of matrices}
Let $R$ be a ring, let $m\leq n$ be integers, and let $A,B\in M_{m,n}(R).$ The matrices $A$ and $B$ are said to be equivalent if $B=PAQ$ for some $P\in\operatorname{GL}_m (R)$ and $Q\in\operatorname{GL}_n (R);$ in this case, we write $A\approx B.$
\end{df}

As in the case of finite fields, the minimal distance plays a crucial role in the setting of rank metric codes over Galois rings:

\begin{df}Let $\cC$ be a linear code endowed with the rank metric. The minimal distance of $\cC$ is $d(\cC):=\min\{\rank(U)|U\in\cC\setminus\{O\}\}$, where $O$ denotes the zero matrix.
\end{df}

Recall the $q$-binomial coefficient:
\[
\binomii{k}{k'}_q:= 
\begin{cases}
    0 & \text{if }k< k' \\
    1 & \text{if } k'=0 \\
    \prod\limits_{i=0}^{k'-1}\dfrac{q^k-q^i}{q^{k'}-q^i} & \text{otherwise.}
\end{cases}
\]
We will use the following result:

\begin{lm} \cite[Lemma 1]{countingsumbodulesfrobeniusrings}\label{counting free submodules in an Artinian local ring}
Let $R$ be a finite local ring with maximal ideal $\mathfrak{m}_R$ and finite residue field of $q$ elements. Then, the number of free $R$--linear submodules of rank $k'$ of a given free $R$--linear code of rank $k$ is given by
\[
\stirling{k}{k'}:=\lvert\mathfrak{m}_R\rvert^{k'(k-k')}\binomii{k}{k'}_q.
\]
\end{lm}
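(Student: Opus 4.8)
The plan is to convert the count into a double-counting argument over the residue field, using that over a local ring with residue field $\mathbb{F}_q$ a square matrix is invertible precisely when its reduction modulo $\mathfrak{m}_R$ is invertible. As an $R$--module the ambient free code of rank $k$ is just $R^k$, so it suffices to count the free submodules $N\subseteq R^k$ with $N\cong R^{k'}$.

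The first step I would carry out is the following lemma: a $k'$--tuple $(v_1,\dots,v_{k'})$ in $R^k$ is a basis of a free rank--$k'$ submodule of $R^k$ if and only if the reductions $\overline{v}_1,\dots,\overline{v}_{k'}\in\mathbb{F}_q^k$ are linearly independent over $\mathbb{F}_q$. The ``if'' implication is immediate: complete $\overline{v}_1,\dots,\overline{v}_{k'}$ to an $\mathbb{F}_q$--basis of $\mathbb{F}_q^k$, lift the extra vectors arbitrarily to $R^k$, and note that the resulting $k\times k$ matrix over $R$ has invertible reduction, hence lies in $\GL_k(R)$; thus $v_1,\dots,v_{k'}$ extend to an $R$--basis of $R^k$ and, in particular, span a free submodule of rank $k'$ of which they are a basis.

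The ``only if'' implication is the one place where the finiteness of $R$ is used, and I would prove it by contraposition. If $\overline{v}_1,\dots,\overline{v}_{k'}$ are dependent then, after relabelling, there are $c_1,\dots,c_{k'-1}\in R$ with $w:=v_{k'}-\sum_{i<k'}c_i v_i\in\mathfrak{m}_R R^k$. Assume $N:=\sum_i Rv_i$ is free of rank $k'$; then $v_1,\dots,v_{k'}$ is a basis of $N$, because a $k'$--element generating set of a free $R$--module of rank $k'$ is automatically a basis (both have $\lvert R\rvert^{k'}$ elements), and with respect to this basis $w$ has coordinate vector $(-c_1,\dots,-c_{k'-1},1)$, which is unimodular, so $\operatorname{Ann}_R(w)=0$. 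But every coordinate of $w$ lies in $\mathfrak{m}_R$, hence $w$ is killed by the socle $\{x\in R:x\mathfrak{m}_R=0\}$, which is nonzero because $R$ is a nonzero Artinian local ring --- a contradiction. Finiteness is essential here: $2\mathbb{Z}_{(2)}\subset\mathbb{Z}_{(2)}$ is a free rank--$1$ submodule with zero reduction.

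Finally I would run the double count. Each free rank--$k'$ submodule $N\cong R^{k'}$ of $R^k$ has exactly $\lvert\GL_{k'}(R)\rvert$ ordered bases, since $\operatorname{Aut}_R(R^{k'})=\GL_{k'}(R)$ acts simply transitively on the $R$--isomorphisms $R^{k'}\xrightarrow{\ \sim\ }N$; by the lemma the total number of ordered bases over all free rank--$k'$ submodules equals the number of $k'$--tuples in $R^k$ with $\mathbb{F}_q$--independent reduction, so
\[
\stirling{k}{k'}=\frac{1}{\lvert\GL_{k'}(R)\rvert}\,\#\bigl\{(v_1,\dots,v_{k'})\in(R^k)^{k'}:\ \overline{v}_1,\dots,\overline{v}_{k'}\ \text{linearly independent over}\ \mathbb{F}_q\bigr\}.
\]
Entrywise reduction modulo $\mathfrak{m}_R$ exhibits $(R^k)^{k'}$ as a disjoint union of fibres of size $\lvert\mathfrak{m}_R\rvert^{kk'}$ over $M_{k\times k'}(\mathbb{F}_q)$, and the admissible tuples are precisely the preimages of the full--column--rank matrices, of which there are $\prod_{i=0}^{k'-1}(q^k-q^i)$; likewise $\lvert\GL_{k'}(R)\rvert=\lvert\mathfrak{m}_R\rvert^{(k')^2}\prod_{i=0}^{k'-1}(q^{k'}-q^i)$. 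Substituting and cancelling the powers of $\lvert\mathfrak{m}_R\rvert$ and the two products gives $\lvert\mathfrak{m}_R\rvert^{k'(k-k')}\binomii{k}{k'}_q$, the boundary cases $k'=0$ and $k'>k$ falling out automatically. The main obstacle is the ``only if'' half of the lemma --- that a free submodule is seen faithfully in the residue field --- and everything after it is bookkeeping.
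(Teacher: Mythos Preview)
The paper does not prove this lemma: it is imported verbatim from \cite[Lemma~1]{countingsumbodulesfrobeniusrings} and used as a black box, so there is no in--paper argument to compare against. Your proof is correct. The characterization you establish --- that a $k'$--tuple in $R^k$ is an ordered basis of a free rank--$k'$ submodule if and only if its reduction modulo $\mathfrak{m}_R$ is $\mathbb{F}_q$--linearly independent --- is exactly the right lemma, and your socle argument for the ``only if'' direction is clean: a nonzero Artinian local ring has nonzero socle, which annihilates anything in $\mathfrak{m}_R R^k$ but cannot annihilate a unimodular element of a free module. The subsequent double count and the computation of $\lvert\GL_{k'}(R)\rvert$ via reduction are routine and correctly carried out.
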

In the case of a Galois ring we have:

\begin{cor}\label{counting free submodules in a Galois ring}
Let $R=\GR (\overline{q},m)$.Then, the number of free $R$--linear submodules of rank $k'$ of a given free $R$--linear code of rank $k$ is given by
\[
\stirling{k}{k'}=\theta_0 (m)^{k'(k-k')}\binomii{k}{k'}_q,
\]
where $\theta_0 (m):=(\overline{q}-\phi (\overline{q}))^m,$ and $\phi$ is Euler's totient function. Here, to avoid confussion we recall that $q=p^m$.
\end{cor}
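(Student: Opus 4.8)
The plan is to deduce this directly from Lemma \ref{counting free submodules in an Artinian local ring}, the only real point being to identify the quantity $\lvert\mathfrak{m}_R\rvert$ occurring there with $\theta_0(m)$. Since a Galois ring $R=\GR(\overline{q},m)$ is a finite local ring whose residue field $R/\mathfrak{m}_R\cong\mathbb{F}_q$ has $q=p^m$ elements (and whose maximal ideal is $\mathfrak{m}_R=pR$), Lemma \ref{counting free submodules in an Artinian local ring} applies verbatim and yields
\[
\stirling{k}{k'}=\lvert\mathfrak{m}_R\rvert^{k'(k-k')}\binomii{k}{k'}_q .
\]
Thus it suffices to prove that $\lvert\mathfrak{m}_R\rvert=\theta_0(m)=(\overline{q}-\phi(\overline{q}))^m$.

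First I would compute $\lvert\mathfrak{m}_R\rvert$. Writing $\overline{q}=p^s$, the presentation $R=\mathbb{Z}[x]/(p^s,f(x))$ with $\deg f=m$ exhibits $R$ as the set of polynomials of degree $<m$ with coefficients in $\mathbb{Z}/p^s\mathbb{Z}$, so $\lvert R\rvert=(p^s)^m=p^{sm}$; since $\mathfrak{m}_R=pR$ and $R/\mathfrak{m}_R\cong\mathbb{F}_q$ has $p^m$ elements, one gets $\lvert\mathfrak{m}_R\rvert=\lvert R\rvert/\lvert R/\mathfrak{m}_R\rvert=p^{sm}/p^m=p^{m(s-1)}$.

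Second, I would evaluate $\theta_0(m)$: since $\phi(p^s)=p^s-p^{s-1}$, we have $\overline{q}-\phi(\overline{q})=p^s-(p^s-p^{s-1})=p^{s-1}$, hence $\theta_0(m)=(p^{s-1})^m=p^{m(s-1)}$. Comparing with the previous paragraph gives $\lvert\mathfrak{m}_R\rvert=\theta_0(m)$, and substituting this into the displayed identity proves the corollary.

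There is no genuine obstacle here: the statement is a bookkeeping corollary of Lemma \ref{counting free submodules in an Artinian local ring}. The only thing that requires a little care is the interplay of the three parameters $p$, $\overline{q}=p^s$ and $q=p^m$ attached to a Galois ring, together with the warning (already made in the statement) that in this corollary the residue field has $q=p^m$ elements rather than $q=p^k$ as in Definition \ref{Galois ring: definition}.
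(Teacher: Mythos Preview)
Your proof is correct and follows essentially the same approach as the paper: both reduce the corollary to Lemma~\ref{counting free submodules in an Artinian local ring} and then verify that $\lvert\mathfrak{m}_R\rvert=\theta_0(m)$. The only cosmetic difference is that the paper computes $\lvert\mathfrak{m}_R\rvert$ by writing elements of $R$ in a free basis over $R_1=\GR(\overline{q},1)=\Z/\overline{q}\Z$ and observing that $\beta\in\mathfrak{m}_R$ iff each coordinate is a non-unit of $R_1$ (there being $\overline{q}-\phi(\overline{q})$ such), whereas you use $\lvert\mathfrak{m}_R\rvert=\lvert R\rvert/\lvert R/\mathfrak{m}_R\rvert$ and then match numerically with $\theta_0(m)$; both computations are equally elementary.
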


\begin{proof}
Let $\mathfrak{m}$ be the maximal ideal of $R,$ we only need to check that $\abs{\mathfrak{m}}=\theta_0 (m).$ Indeed, let $\beta\in\GR (\overline{q},m),$ set $R_1:=\GR (\overline{q},1),$ and let $\{1=\gamma_0,\gamma_1,\ldots,\gamma_{m-1}\}$ be a free $R_1$--basis of $R$. In this way, we can write \cite[page 144, (3.1)]{Galoisringsbook}
\[
\beta=c_0+\sum_{i=1}^{m-1}c_i\gamma_i
\]
for some unique $c_i\in R_1$. Hence $\beta\in\mathfrak{m}$ if and only if all the $c_i$'s are divisible by $p.$ This shows $\abs{\mathfrak{m}}=\theta_0 (m),$ as claimed.
\end{proof}

The following notations will be useful later on:
\begin{nt}Set
\[
\{n\}:=\stirling{n}{1}\text{ and }\{n\}!:=\prod_{j=1}^n \{j\}.
\]
Notice that
\[
\stirling{k}{k'}=\frac{\{k\}!}{\{k'\}!\cdot\{k-k'\}!}.
\]
\end{nt}
We will also make use of the following result on the structure of modules and submodules of a chain ring.

\begin{teo}\cite[Theorem 2.5]{HonoldLandjev2000}\label{over a chain ring free modules is a serre category}
Let $H$ be  free module of rank $n$ over a chain ring $R$ and let $M$ be a submodule of $H$. Then 
$$M \text{ is free } \iff H/M \text{ is free } \iff \rank(H/M)=n-k.$$    
\end{teo}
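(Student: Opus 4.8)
The plan is to argue through the two numerical invariants of a finitely generated $R$-module $N$: its minimal number of generators $\mu_R(N)=\dim_{R/\mathfrak m}(N/\mathfrak m N)$ (for free $N$ this is the rank, so here I write $k:=\mu_R(M)=\rank(M)$) and its composition length $\ell(N)$, which is additive on short exact sequences and satisfies $\ell(R^t)=st$, where $s$ is the nilpotency index of $\theta$ (so $\theta^s=0\neq\theta^{s-1}$). The one preliminary observation I would record is the bound $\ell(N)\le s\,\mu_R(N)$, with equality if and only if $N$ is free: indeed $N$ is a quotient of $R^{\mu_R(N)}$, and a surjection between modules of equal finite length is an isomorphism.

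First I would handle ``$M$ free $\Rightarrow H/M$ free and $\rank(H/M)=n-k$''. Fix a basis $m_1,\dots,m_k$ of $M$ and consider the images $\overline m_1,\dots,\overline m_k$ in $H/\mathfrak m H$; the crux is that these are linearly independent over $R/\mathfrak m$. If not, some combination $v=\sum_i c_i m_i$ with at least one $c_i$ a unit lies in $\mathfrak m H=\theta H$, say $v=\theta w$; then $\theta^{s-1}v=\theta^s w=0$, but expanding $v$ in the basis $m_1,\dots,m_k$ of the free module $M$ forces $\theta^{s-1}c_i=0$ for all $i$, hence $\theta^{s-1}=0$, a contradiction. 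Therefore $\mu_R(H/M)=\dim_{R/\mathfrak m}\bigl((H/\mathfrak m H)/\langle\overline m_1,\dots,\overline m_k\rangle\bigr)=n-k$, and by additivity $\ell(H/M)=\ell(H)-\ell(M)=sn-sk=s\,\mu_R(H/M)$; the equality case of the bound above then shows $H/M$ is free, of rank $n-k$.

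Next, ``$H/M$ free $\Rightarrow M$ free'': a free module is projective, so the sequence $0\to M\to H\to H/M\to 0$ splits and exhibits $M$ as a direct summand of the free module $H$, hence projective, hence free since $R$ is local; additivity of $\ell$ then gives $\rank(H/M)=n-k$ once more. Finally, ``$\rank(H/M)=n-k\Rightarrow M$ and $H/M$ free'': the bounds $\ell(M)\le sk$ and $\ell(H/M)\le s(n-k)$ must both be equalities, since their sum equals $\ell(H)=sn$, so both modules are free by the equality case. These three implications close the cycle of equivalences.

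I expect the main obstacle to be precisely the implication ``$M$ free $\Rightarrow H/M$ free'': over a general chain ring a free submodule of a free module need not be a direct summand --- $p\,\mathbb Z_{(p)}\subset\mathbb Z_{(p)}$ is the standard counterexample --- so this step must genuinely use that $\mathfrak m=(\theta)$ is nilpotent, and the multiplication-by-$\theta^{s-1}$ trick above is exactly what converts ``$M$ free and $\mathfrak m$ nilpotent'' into ``the chosen basis of $M$ survives modulo $\mathfrak m$''; everything else is formal bookkeeping with length. An alternative organization would first establish the Smith (stacked-basis) normal form for the inclusion $M\hookrightarrow H$ over the chain ring --- a basis $b_1,\dots,b_n$ of $H$ and exponents $0\le a_1\le\dots\le a_k\le s-1$ with $M=\bigoplus_{i=1}^k R\theta^{a_i}b_i$ --- after which all three conditions are visibly equivalent to $a_1=\dots=a_k=0$; in that route the normal form itself is the hard part.
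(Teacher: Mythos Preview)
The paper does not give its own proof of this statement: it is simply quoted from \cite[Theorem 2.5]{HonoldLandjev2000} and used as a black box. So there is nothing to compare your argument against within the paper itself.

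That said, your argument is correct and complete. The two invariants you use --- $\mu_R(N)=\dim_{R/\mathfrak m}(N/\mathfrak m N)$ and the composition length $\ell(N)$ --- together with the criterion ``$\ell(N)\le s\,\mu_R(N)$ with equality iff $N$ is free'' handle all three implications cleanly. Your step ``$M$ free $\Rightarrow$ the basis of $M$ stays independent in $H/\mathfrak m H$'' via multiplication by $\theta^{s-1}$ is exactly the place where finiteness (nilpotency of $\mathfrak m$) enters, and you flag this correctly with the $p\,\mathbb Z_{(p)}\subset\mathbb Z_{(p)}$ example. The remaining two implications are, as you say, formal. The alternative route you sketch through the Smith/stacked-basis normal form is essentially what Honold and Landjev themselves do (their Theorem~2.5 is packaged together with the structure theorem for submodules of $R^n$), so your length-based argument is in fact a bit more direct than the original source.

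One cosmetic point: the theorem as stated in the paper never defines $k$, and you are right to read it as $k=\mu_R(M)$; without that convention the third condition ``$\rank(H/M)=n-k$'' would be meaningless when $M$ is not assumed free a priori.
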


Now, assume that $M$ and $N\leq M$ are free submodules of a module over a Galois ring. Hence, by Theorem \ref{over a chain ring free modules is a serre category} the quotient $M/N$ is free and hence  the short exact sequence $0\to N\to M\to M/N\to 0$ splits, namely, $M\cong N \oplus M/N$, a fact that actually happens in any semisimple category.

The following result is a Steinitz type Lemma for free modules over a chain ring:

\begin{cor}\cite[Corollary 6 (c)]{GrassmannformulaoverArtinianrings}\label{Basis extension Lemma}
Let $R$ be a chain ring, let $M$ be a free $R$--module, and let $N\subseteq M$ be a free $R$--submodule of $M.$ Then, if $\{x_1,\dots , x_n\}$ is a free $R$--module basis of $N$ then there are $y_{n+1}\ldots, y_m\in M$ such that $\{x_1,\dots , x_n,y_{n+1},\dots,y_{m} \}$ is a free $R$--module basis of $M.$
\end{cor}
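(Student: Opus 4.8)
The plan is to derive this from Theorem~\ref{over a chain ring free modules is a serre category} together with the fact that a short exact sequence of modules ending in a free (hence projective) module splits. So, set $m:=\rank(M)$. Since $N$ is a free submodule of the free module $M$, Theorem~\ref{over a chain ring free modules is a serre category} first gives that $M/N$ is free, and moreover that $\rank(M/N)=m-n$ (recall $\rank$ is well defined here via $\rank M=\dim_{\mathbb F_q}(M/\mathfrak m M)$, where $\mathfrak m$ is the maximal ideal of $R$). Being free, $M/N$ is projective, so the canonical exact sequence $0\to N\to M\xrightarrow{\pi} M/N\to 0$ splits: there is an $R$--linear section $\sigma\colon M/N\to M$ with $\pi\circ\sigma=\Id$, and correspondingly $M=N\oplus\sigma(M/N)$ as an internal direct sum.

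Next I would pick a free $R$--module basis $\{\overline z_{n+1},\dots,\overline z_m\}$ of $M/N$ and set $y_j:=\sigma(\overline z_j)$ for $n+1\le j\le m$. Since $\sigma$ is $R$--linear and injective, $\{y_{n+1},\dots,y_m\}$ is a free $R$--module basis of the summand $\sigma(M/N)$. Concatenating this with the given basis $\{x_1,\dots,x_n\}$ of $N$ and using $M=N\oplus\sigma(M/N)$, one gets that $\{x_1,\dots,x_n,y_{n+1},\dots,y_m\}$ is a free $R$--module basis of $M$: every $v\in M$ decomposes uniquely as $v=u+w$ with $u\in N$ and $w\in\sigma(M/N)$, and then $u$ (resp.\ $w$) is a unique $R$--linear combination of the $x_i$ (resp.\ the $y_j$).

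The only delicate point — which I would treat as the main (though mild) obstacle — is the bookkeeping of ranks: one must know that exactly $m-n$ new vectors are needed, i.e.\ that $\rank(M/N)=\rank(M)-\rank(N)$, and this is exactly what Theorem~\ref{over a chain ring free modules is a serre category} supplies. If one prefers to avoid invoking projectivity, an equivalent direct argument works: lift any basis $\{\overline z_j\}$ of $M/N$ to elements $y_j\in M$; then $\{x_i\}\cup\{y_j\}$ generates $M$, because for $v\in M$ the class $\pi(v)$ is an $R$--combination $\sum_j c_j\,\pi(y_j)$, so $v-\sum_j c_j y_j$ lies in $N$ and hence is an $R$--combination of the $x_i$. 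This is a generating set of cardinality $n+(m-n)=m$ of a free module of rank $m$ over the finite (hence Noetherian) ring $R$; the induced surjection $R^m\twoheadrightarrow M\cong R^m$ is then an isomorphism (a surjective endomorphism of a Noetherian module is injective), which forces the generators to be $R$--linearly independent, completing the proof either way.
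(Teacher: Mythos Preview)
Your argument is correct. Note, however, that the paper does not supply its own proof of this corollary: it is quoted verbatim from \cite[Corollary 6 (c)]{GrassmannformulaoverArtinianrings}. That said, the paragraph immediately preceding the corollary already sketches exactly the mechanism you use --- Theorem~\ref{over a chain ring free modules is a serre category} forces $M/N$ to be free, whence the short exact sequence $0\to N\to M\to M/N\to 0$ splits --- so your proof is precisely the natural completion of that sketch, and there is nothing to compare.
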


A consequence of Theorem \ref{over a chain ring free modules is a serre category} is that free modules over a chain ring behave somehow as vector spaces of finite dimension over a field:

\begin{cor}\label{for chain rings rank is inclusion preserving}
Let $R$ be a chain ring, let $M$ be a free $R$--module, and let $U,V$ be free $R$--submodules. Then, the following assertions hold.

\begin{enumerate}[(i)]

\item If $U\subseteq V$ then $\rank(U)\leq\rank(V)$ 

\item If $U\subseteq V$ and $\rank(U)=\rank(V)$ then $U=V.$
\end{enumerate}
\end{cor}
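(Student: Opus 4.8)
The plan is to deduce both statements from Theorem \ref{over a chain ring free modules is a serre category}, applied not to the big ambient module $M$ but to the pair $U\subseteq V$ with $V$ itself playing the role of the ambient free module. Since $U\subseteq V$ and both are submodules of $M$, $U$ is a submodule of the free module $V$, and $U$ is free by hypothesis. Fixing $n:=\rank(V)$ and $k:=\rank(U)$, Theorem \ref{over a chain ring free modules is a serre category} then tells us that $V/U$ is free and that $\rank(V/U)=n-k$.

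From here (i) is immediate: ranks are non-negative, so $n-k=\rank(V/U)\geq 0$, i.e.\ $\rank(U)\leq\rank(V)$. For (ii), if $\rank(U)=\rank(V)$ then $\rank(V/U)=0$; a free module of rank zero is the zero module, hence $V/U=0$, that is, $U=V$.

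The only point that needs a moment's care is reading off the equality $\rank(V/U)=\rank(V)-\rank(U)$ correctly from the statement of Theorem \ref{over a chain ring free modules is a serre category}, which carries an implicit ambient rank and submodule rank; once $n$ and $k$ are fixed as above there is nothing more to do. As an alternative one can avoid quotients altogether and argue directly via the Steinitz-type Corollary \ref{Basis extension Lemma}: extend a free basis $\{x_1,\dots,x_k\}$ of $U$ to a free basis $\{x_1,\dots,x_k,y_{k+1},\dots,y_n\}$ of $V$, which gives $\rank(U)=k\leq n=\rank(V)$ at once, with equality precisely when no $y_j$'s are adjoined, i.e.\ when the basis of $U$ already spans $V$, forcing $U=V$. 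Either way there is no genuine obstacle; the statement is simply the free-module-over-a-chain-ring incarnation of the familiar dimension comparison for finite-dimensional vector spaces, and the work has all been done in the quoted results.
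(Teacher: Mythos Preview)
Your proof is correct and follows essentially the same route as the paper: apply Theorem \ref{over a chain ring free modules is a serre category} to the inclusion $U\subseteq V$ (the paper phrases this via the short exact sequence $0\to U\to V\to V/U\to 0$) to obtain $\rank(V)=\rank(U)+\rank(V/U)$, and then read off (i) from non-negativity of the rank and (ii) from the fact that a free module of rank zero is zero. Your alternative argument via Corollary \ref{Basis extension Lemma} is a pleasant variant but not needed.
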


\begin{proof}
Let $U\subseteq V$ be free $R$--submodules of $M,$ we have the following short exact sequence.
\[
0\to U\to V\to V/U\to 0.
\]
In this way, using Theorem \ref{over a chain ring free modules is a serre category} we have that $\rank (V)=\rank (U)+\rank (V/U).$ Since $\rank (V/U)\geq 0,$ part (i) holds. On the other hand, $\rank (U)=\rank (V)$ if and only if $\rank (V/U)=0,$ which is equivalent to say, since $V/U$ is free by Theorem \ref{over a chain ring free modules is a serre category}, that $V/U=0,$ hence $V=U$ and part (ii) holds too.
\end{proof}

The Singleton bound is also available in the setting of codes over finite chain rings:

\begin{teo}\cite[Proposition 3.20]{rankmetriccodesoverpir}\label{Singleton bound for the rank metric over chain rings}
Let $(R,\mathfrak{m})$ be a finite chain ring, and let $\mathcal{C}\subseteq M_{m,n}(R)$ be a rank code with minimal rank distance $d$. Then, we have that
\[
\lvert\mathcal{C}\rvert\leq\lvert R\rvert^{\min\{m(n-d+1),n(m-d+1)\}}.
\]
Equivalently, $\rank(\mathcal{C})\leq\min\{m(n-d+1),n(m-d+1)\}$. where $\rank(\mathcal{C})$ denotes the cardinality of a basis of $\mathcal{C}$ as $R$--module.
\end{teo}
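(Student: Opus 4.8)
The plan is to adapt the classical puncturing argument for the Singleton bound to the present setting. The key elementary observation is the following: if $A\in M_{m,n}(R)$ has at most $r$ nonzero columns, then $\CS(A)$ is generated by those $\le r$ columns, so $\rank(A)=\mu_R(\CS(A))\le r$; dually, by Proposition \ref{over a chain ring the rank is as in vector spaces} the same bound holds if $A$ has at most $r$ nonzero rows, since then $\rank(A)=\mu_R(\RS(A))\le r$. Both halves of the stated minimum will come from one such puncturing estimate, once applied to columns and once to rows.

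First I would establish $\lvert\mathcal{C}\rvert\le\lvert R\rvert^{m(n-d+1)}$. Consider the $R$--linear puncturing map $\pi\colon M_{m,n}(R)\to M_{m,n-d+1}(R)$ deleting the last $d-1$ columns, and restrict it to $\mathcal{C}$. If $A,B\in\mathcal{C}$ satisfy $\pi(A)=\pi(B)$, then $A-B$ is supported on its last $d-1$ columns, whence $\rank(A-B)\le d-1$ by the observation above. Since $\mathcal{C}$ has minimum rank distance $d$, this forces $A=B$; thus $\pi|_{\mathcal{C}}$ is injective and $\lvert\mathcal{C}\rvert\le\lvert M_{m,n-d+1}(R)\rvert=\lvert R\rvert^{m(n-d+1)}$.

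Next, to obtain $\lvert\mathcal{C}\rvert\le\lvert R\rvert^{n(m-d+1)}$ I would apply the same argument to the transposed code $\mathcal{C}^{\mathsf T}:=\{A^{\mathsf T}\mid A\in\mathcal{C}\}\subseteq M_{n,m}(R)$. Transposition is an $R$--module isomorphism preserving cardinality, and $\rank(A^{\mathsf T})=\rank(A)$ because $\CS(A^{\mathsf T})=\RS(A)\cong\CS(A)$ by Proposition \ref{over a chain ring the rank is as in vector spaces}; hence $\mathcal{C}^{\mathsf T}$ is again a rank code with minimum distance $d$, and puncturing $d-1$ of its columns (equivalently, deleting $d-1$ rows of the original matrices) gives $\lvert\mathcal{C}\rvert=\lvert\mathcal{C}^{\mathsf T}\rvert\le\lvert R\rvert^{n(m-d+1)}$. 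Combining the two inequalities yields $\lvert\mathcal{C}\rvert\le\lvert R\rvert^{\min\{m(n-d+1),\,n(m-d+1)\}}$. Finally, since $\mathcal{C}$ is free of rank $\rank(\mathcal{C})$ over $R$, one has $\lvert\mathcal{C}\rvert=\lvert R\rvert^{\rank(\mathcal{C})}$, and taking logarithms in base $\lvert R\rvert$ converts the cardinality bound into $\rank(\mathcal{C})\le\min\{m(n-d+1),\,n(m-d+1)\}$.

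I do not expect a genuine obstacle: the argument is formal once Proposition \ref{over a chain ring the rank is as in vector spaces} is available to pass freely between column rank and row rank. The only point that deserves a moment's care is the trivial but essential fact that a matrix supported on $d-1$ columns (resp.\ rows) has rank at most $d-1$ — over a general ring one should not invoke dimension counting but simply the definition of $\rank$ as a minimal number of generators — together with the implicit assumption $d\le\min\{m,n\}$, which makes both exponents non-negative and the statement meaningful.
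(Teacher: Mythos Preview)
Your argument is correct. The puncturing map is $R$--linear, its kernel in $\mathcal C$ consists of matrices supported on at most $d-1$ columns, and the observation $\rank(A)\le \mu_R(\CS(A))\le (\text{number of nonzero columns of }A)$ is indeed immediate from the definition; injectivity and the cardinality bound follow, and the transposition trick together with Proposition~\ref{over a chain ring the rank is as in vector spaces} yields the second inequality. The final passage from $\lvert\mathcal C\rvert$ to $\rank(\mathcal C)$ via freeness is also fine.

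Note, however, that there is nothing to compare against in this paper: Theorem~\ref{Singleton bound for the rank metric over chain rings} is not proved here but quoted from \cite[Proposition~3.20]{rankmetriccodesoverpir}. Your proof is in fact the standard one and is essentially what appears in that reference (where the puncturing/projection argument is phrased for codes over principal ideal rings). So your write-up is both correct and aligned with the original source; the present paper simply cites the result without reproving it.
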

A code $\mathcal{C}$ for which Singleton inequality is indeed an equality is called a \emph{maximum rank-distance code}, abridged MRD.

\begin{ex}
For the code $\mathcal{C}$ of Example \ref{example after defining rank metric}, the minimal distance is $1$ and the number of codewords is $4^6$. However, the right hand side of Singelton's bound is $4^{12}$.

For the subcode $\mathcal{C}_0\subseteq\mathcal{C}$ consisting on diagonal matrices which are even multiples of the identity, the minimal distance is $2$, the number of codewords is $2^3$ and the right hand side of Singleton's bound is $2^{12}$.
\label{example after Singleton inequality}
\end{ex}

\section{MacWilliams duality theory}\label{section: coefficient MacWilliams}

In this section we give a relation between the weight enumerator coefficients of a code and those of its dual. Namely, we provide a generalisation of the main result in \cite{ravagnani} to the setting of rank metric codes over finite chain rings, which we made more explicit in the case of Galois rings.

To start with, for each ring $R$ and integers $1\leq n\leq m$, the set $M_{m,n}(R)$ of matrices can be regarded as an hermitian left $M_m (R)$-module with sesquilinear form $b$ given by 
$$
b(X,Y):=\operatorname{Tr} (XY^t),
$$ 
where $M_m (R)$ is regarded as ring with involution given by matrix transposition. 
\begin{df}[The dual of a submodule] Given a submodule $\mathcal{C}\subset M_{m,n}(R)$ (or more in general, just a subset), its dual is defined as
$$
\mathcal{C}^{\perp}:=\{Y\in M_{m,n}(R):\ b(X,Y)=0\text{ for all }X\in\mathcal{C}\}.
$$
\end{df} 
It is easy to see that $\mathcal{C}^{\perp}$ is also a submodule  of $M_{m,n}(R)$. If, in addition, $\mathcal{C}$ is a free submodule, namely, a code, then by \cite[(3.6.2)]{bookquadraticformsoverrings} we have, whenever $b_{|\mathcal{C}}$ is non-singular on $\mathcal{C},$ that
\[
(M_{m,n}(R),b)=(\mathcal{C},b_{|\mathcal{C}})\perp (\mathcal{C}^{\perp},b_{|\mathcal{C}^{\perp}}),
\]
where $\perp$ denotes the orthogonal sum as defined in \cite[(3.4)]{bookquadraticformsoverrings}; in particular, $\mathcal{C}^{\perp}$is also a code, referred to as the dual of $\mathcal{C}$, and we have that
\begin{equation}\label{nondeg}
\rank (\mathcal{C}^{\perp})=mn-\rank(\mathcal{C}).
\end{equation}
\begin{ex}For the code $\mathcal{C}$ of Example \ref{example after defining rank metric}, the dual is $\mathcal{C}^{\perp}=\left\{\left(\begin{array}{cc}0 & a\\b & 0\end{array}\right): a,b\in R\right\}$.

For the code $\mathcal{C}_0$ of Example \ref{example after Singleton inequality}, the dual is $\mathcal{C}_0^{\perp}=\left\{\left(\begin{array}{cc}a & b\\c & d\end{array}\right): a\equiv b\pmod{2}\in R\right\}$.
\label{ejemplo3}
\end{ex}

\subsection{Shortening and strict shortening}
For any finite Artinian local ring $R$, and any free $R$-module $U\subseteq R^m$, we set (compare with \cite[Definition 2.4]{ByrneCotardoRavaganizetafunctions})
\[
\operatorname{Mat}_U(m\times n, R):=\{X\in M_{m,n}(R):\ \CS(X)\subseteq U\}.
\]
The following result is just \cite[Lemma 26 and 27]{ravagnani} replacing a finite field by our ring $R$. Since the proofs are exactly the same as in \cite{ravagnani}, we omit it.

\begin{lm}\label{first Ravagnani calculations}
Let $U\subseteq R^m$ be a free $R$--submodule. Then, we have that
\[
\rank_R (\operatorname{Mat}_U(m\times n, R))=n\rank_R (U)\text{ and }\operatorname{Mat}_U(m\times n, R)^{\perp}=\operatorname{Mat}_{U^{\perp}}(m\times n, R).
\]
\end{lm}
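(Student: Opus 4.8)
The plan is to handle the two assertions separately after rewriting the defining condition $\CS(X)\subseteq U$ column by column: since $U$ is a submodule of $R^m$, a matrix $X$ lies in $\operatorname{Mat}_U(m\times n, R)$ if and only if each of its $n$ columns belongs to $U$. Consequently the map sending a matrix to the ordered tuple of its columns is an $R$-module isomorphism $\operatorname{Mat}_U(m\times n, R)\cong U^{\oplus n}$. Since $U$ is free of rank $\rank_R(U)$, so is $U^{\oplus n}$, and this already settles the first identity $\rank_R(\operatorname{Mat}_U(m\times n, R))=n\rank_R(U)$.

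Before attacking the second identity one must make sure its right-hand side even makes sense, i.e.\ that the orthogonal complement $U^{\perp}\subseteq R^m$ of $U$ with respect to the standard inner product $\langle\cdot,\cdot\rangle$ on $R^m$ is a \emph{free} submodule. Here I would argue as follows: by Theorem \ref{over a chain ring free modules is a serre category} (and the splitting remark following it, or Corollary \ref{Basis extension Lemma}) we may write $R^m=U\oplus W$ with $W$ free of rank $m-\rank_R(U)$; since the standard inner product is unimodular it induces an isomorphism $R^m\cong\Hom_R(R^m,R)$, under which $U^{\perp}$ corresponds to the functionals vanishing on $U$, that is, to $\Hom_R(R^m/U,R)$. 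As $R^m/U\cong W$ is free of rank $m-\rank_R(U)$, so is $U^{\perp}$, whence $\operatorname{Mat}_{U^{\perp}}(m\times n, R)$ is defined and, by the first identity applied to $U^{\perp}$, of rank $n(m-\rank_R(U))=mn-\rank_R(\operatorname{Mat}_U(m\times n, R))$.

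For the equality $\operatorname{Mat}_U(m\times n, R)^{\perp}=\operatorname{Mat}_{U^{\perp}}(m\times n, R)$, write $x_1,\dots,x_n$ and $y_1,\dots,y_n$ for the columns of $X$ and $Y$; a direct computation of $\operatorname{Tr}(XY^t)$ in terms of entries gives the identity
\[
b(X,Y)=\operatorname{Tr}(XY^t)=\sum_{j=1}^n\langle x_j,y_j\rangle .
\]
If every $y_j\in U^{\perp}$, each summand vanishes for every $X\in\operatorname{Mat}_U(m\times n, R)$, so $\operatorname{Mat}_{U^{\perp}}(m\times n, R)\subseteq\operatorname{Mat}_U(m\times n, R)^{\perp}$. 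Conversely, if $Y\in\operatorname{Mat}_U(m\times n, R)^{\perp}$, fix $j_0$ and $x\in U$ and apply the identity to the matrix whose $j_0$-th column equals $x$ and whose remaining columns vanish (which clearly lies in $\operatorname{Mat}_U(m\times n, R)$): it reads $\langle x,y_{j_0}\rangle=0$. Hence $y_{j_0}\in U^{\perp}$ for every $j_0$, so $\CS(Y)\subseteq U^{\perp}$ and $Y\in\operatorname{Mat}_{U^{\perp}}(m\times n, R)$, proving the reverse inclusion.

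The only genuinely delicate point is the freeness of $U^{\perp}$: one cannot obtain it for free from the orthogonal decomposition of $(M_{m,n}(R),b)$ recalled around \eqref{nondeg}, because the restriction of $\langle\cdot,\cdot\rangle$ to $U$ may well be degenerate (for instance $U=R(1,1)\subseteq R^2$ when $2=0$ in $R$), which is why I route the argument through the unimodularity of the ambient form together with $U$ being a direct summand. Everything else is a routine unwinding of definitions; in particular, once $U^{\perp}$ is known to be free one could alternatively finish the second identity from the inclusion $\operatorname{Mat}_{U^{\perp}}(m\times n, R)\subseteq\operatorname{Mat}_U(m\times n, R)^{\perp}$ by a rank count based on the first identity and Corollary \ref{for chain rings rank is inclusion preserving}(ii), but the column-wise argument above makes this detour unnecessary.
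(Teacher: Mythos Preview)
Your proof is correct and follows essentially the same column-wise approach that the paper defers to (the paper omits the proof, declaring it identical to \cite[Lemmas 26 and 27]{ravagnani}): identify $\operatorname{Mat}_U(m\times n,R)$ with $U^{\oplus n}$ for the rank statement, and use $\operatorname{Tr}(XY^t)=\sum_j\langle x_j,y_j\rangle$ together with single-column test matrices for the duality statement. Your explicit verification that $U^{\perp}$ is free---via $U^{\perp}\cong\Hom_R(R^m/U,R)$ and Theorem~\ref{over a chain ring free modules is a serre category}---is a worthwhile addition, since this is the one point where the field argument in \cite{ravagnani} does not carry over verbatim and the paper's ``exactly the same'' phrasing glosses over it.
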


\begin{df}[The weight enumerator]
Let $\cC$ be a $[m\times n, k, d]$ a code of rank $k$ in $M_{m,n}(R)$ of minimal distance $d$. For $d\leq t\leq n$,  denote  $W_t(C)=\lvert\{X\in C:\ \rank(X)=t\}\rvert$. The weight enumerator of $\cC$ is defined as 
$$W_{\cC}(x,y)=x^n+\sum_{t=d}^n W_t(\cC)x^{n-t}y^t.$$
Hereafter, sometimes $W_t(\cC)$ will be denoted as $A_t,$ and $W_t(\cC^{\perp})$ will be denoted by $B_t.$
\end{df}
\begin{ex}
For the code $\mathcal{C}$ of Example \ref{example after defining rank metric}, taking into account that $|R|=4^3$, we have that 
$$
W_{\mathcal{C}}(x,y)=x^2+126xy+3969y^2.
$$
For the code $\mathcal{C}_0$ of Example \ref{example after Singleton inequality}, we have that $W_{\mathcal{C}_0}(x,y)=x^2+7y^2$.
\label{ejemplo4}
\end{ex}

Since in $M_{m,n}(R)$ the trace form is non degenerate we have, according to Eq. \ref{nondeg}, that $\rank(\cC^{\perp})=mn-\rank(\cC)$.

\begin{df}[Shortening and strict shortening]
Let $\cC$ be a code in $M_{m,n}(R)$ and $U\subseteq R^n$ a free submodule.
\begin{enumerate}[(i)]

\item We define the \textit{shortening of $\cC$ by $U$} as $\cC_U := \{X\in\cC:\ U \subseteq\ker (X) \}.$ It is easy to check that $\cC_U$ is an $R$--submodule of $\cC.$

\item We define the \textit{strict shortening of $\cC$ by $U$} as $\widehat{\cC_U}:=\{X\in\cC:\ U = \ker(X) \}.$ In contrast with the shortening, in general $\widehat{\cC_U}$ is not necessarily an $R$--submodule
of $\cC.$
\end{enumerate}
\end{df}

Next technical result will describe in a more explicit way the shortening of a code over a chain ring.

\begin{lm}\label{shortening as intersection}
Let $R$ be a chain ring, let $\cC\subseteq M_{m,n}(R)$ be a rank metric code, and let $U\subseteq R^n$ be a free $R$--submodule of rank $n-u.$ Then, setting $M_U:=\{X\in M_{m,n}(R):\ U\subseteq\ker (X)\},$ we have that $M_U$ is a free $R$--module of rank $um$ such that $\mathcal{C}_U=\mathcal{C}\cap M_U.$
\end{lm}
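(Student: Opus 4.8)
The statement has two independent parts: that $M_U$ is free of rank $um$, and that $\mathcal{C}_U = \mathcal{C}\cap M_U$. The second part is essentially a tautology once we unwind definitions: both $\mathcal{C}_U$ and $\mathcal{C}\cap M_U$ consist of exactly those $X\in\mathcal{C}$ with $U\subseteq\ker(X)$, so there is nothing to prove beyond citing the definition of $\mathcal{C}_U$ and of $M_U$. The real content is the first part, computing the rank of $M_U$ and checking freeness.

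**Computing $M_U$.** The plan is to use the basis extension lemma (Corollary~\ref{Basis extension Lemma}) to reduce to a standard-form $U$. Since $U\subseteq R^n$ is free of rank $n-u$, pick a basis $\{x_1,\dots,x_{n-u}\}$ of $U$ and extend it to a basis $\{x_1,\dots,x_n\}$ of $R^n$; let $Q\in\GL_n(R)$ be the change-of-basis matrix carrying the standard basis to this one, so that $U = \RS$ of the first $n-u$ rows of $Q^{-1}$, equivalently $U$ is carried to the coordinate submodule $R^{n-u}\times\{0\}^u$ by right multiplication by a unit. For $X\in M_{m,n}(R)$ one has $U\subseteq\ker(X)$ iff $Xx_i^t = 0$ for $i\le n-u$, and after the substitution $X \mapsto XQ$ (a bijection of $M_{m,n}(R)$ preserving everything in sight) this becomes the condition that the first $n-u$ columns of $XQ$ vanish. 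Hence $M_U$ is, up to the linear automorphism $X\mapsto XQ$, the set of matrices supported on the last $u$ columns, which is visibly a free $R$-module isomorphic to $M_{m,u}(R)$, of rank $um$. I should double-check which side the kernel condition lives on: $\ker(X)$ here should mean $\{v\in R^n : Xv^t = 0\}$ (the right kernel, a submodule of $R^n$, consistent with $U\subseteq R^n$), so the reduction is by column operations on $X$, i.e.\ right multiplication, as above.

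**The subtle point.** The one place requiring care is that "$U\subseteq\ker(X)$ iff $Xx_i^t=0$ for the basis vectors $x_i$'' uses that $\ker(X)$ is a submodule and a basis generates $U$ — fine — and, conversely, that the solution set $\{v : Xv^t = 0\}$ of the reduced system (first $n-u$ columns zero) is *exactly* $R^{n-u}\times\{0\}^u$ pulled back, with no extra torsion contributions. That is clear: a matrix whose first $n-u$ columns vanish kills precisely $R^{n-u}\times\{0\}^u$ among the coordinate vectors, and in general its right kernel is $R^{n-u}\times\{0\}^u$ plus the right kernel of the remaining $u$-column block, so the pullback of $U$ being contained in $\ker(X)$ is equivalent to those $n-u$ columns being zero. (We never need to know $\ker(X)$ is free — only that the membership test is linear.) So the main obstacle is merely bookkeeping: keeping straight that the shortening condition is a column condition, hence handled by $\GL_n(R)$ acting on the right, and confirming that Corollary~\ref{Basis extension Lemma} applies to give the standard-form reduction. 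Once that is set up, freeness and the rank count $um$ are immediate, and $\mathcal{C}_U = \mathcal{C}\cap M_U$ follows by comparing definitions.
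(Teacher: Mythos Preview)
Your argument is correct. Both you and the paper observe that $\mathcal{C}_U = \mathcal{C}\cap M_U$ is immediate from the definitions, so the only content is identifying $M_U$ as a free module of rank $um$, and here the two approaches diverge.

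The paper proceeds abstractly: it notes that an $X\in M_{m,n}(R)$ with $U\subseteq\ker(X)$ is exactly the datum of an $R$-linear map $R^n/U\to R^m$, so $M_U\cong\Hom_R(R^n/U,R^m)$; since $R^n/U$ is free of rank $u$ by Theorem~\ref{over a chain ring free modules is a serre category}, this $\Hom$ is free of rank $um$, and the proof is two lines. Your route is the concrete counterpart: extend a basis of $U$ to a basis of $R^n$ via Corollary~\ref{Basis extension Lemma}, perform the corresponding $\GL_n(R)$ column change, and read off that $M_U$ becomes the set of matrices supported on the last $u$ columns, visibly $\cong M_{m,u}(R)$. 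Both arguments rest on the same fact (that $U$ is a free direct summand of $R^n$), but the paper packages it via the quotient being free, while you package it via basis extension. The paper's version is shorter and scales better to analogous $\Hom$ identifications; yours is more explicit and does not require the reader to recognize the $\Hom$-isomorphism. Your ``subtle point'' paragraph is overcautious: once you have shown that $M_U$ corresponds under $X\mapsto XQ$ to the matrices whose first $n-u$ columns vanish, you are done, and the digression about what else might lie in $\ker(X)$ is not needed for the claim.
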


\begin{proof}
First of all, we consider the short exact sequence
$$
0\to U\to R^n\to R^n/U\to 0.
$$
Since $U$ and $R^n$ are free $R$--modules, by Theorem \ref{over a chain ring free modules is a serre category} we can guarantee that $R^n/U$ is also a free $R$--module of rank $u.$ This implies that $\Hom_R (R^n/U,R^m)$ is a free $R$--module of rank $um,$ and since it is clear that $M_U\cong\Hom_R (R^n/U,R^m),$ we can conclude that $M_U$ is a free $R$--module of rank $um.$ Finally, notice that the equality $\mathcal{C}_U=\mathcal{C}\cap M_U$ is just the definition of shortening.
\end{proof}
Unfortunately, the shortening of a code is in general not free, as next example shows.

\begin{ex}\label{shortening is not always free}
Let $R:=\GR (4,r)$ for an integer $r\geq 1,$ let $m:=2$ and $n:=3.$ We consider as code $\cC\subseteq M_{2,3}(R)$ the free $R$-submodule of rank $1$ generated by
\[
G:=\begin{pmatrix}
2& 1& 0\\
0& 0& 0
\end{pmatrix}.
\]
Now, take $U\subseteq R^3$ as the free rank $1$ submodule generated by $(1,0,0),$ so $U^{\perp}$ is just the $R$-submodule consisting of all vectors in $R^3$ with a zero in the first position. The reader will easily notice that both $U,$ $U^{\perp}$ and $\cC$ are free. On the other hand, we observe that, for any $Y\in\cC$ we have that
\[
\RS (Y)\cap U^{\perp}=\begin{cases}
\langle (0,2,0)\rangle,\text{ if }Y\neq 0,\\
0,\text{ otherwise.}
\end{cases}
\]
In this way, we obtain that $\cC_U$ is the $R$--module generated by
\[
\begin{pmatrix}
0& 2& 0\\
0& 0& 0
\end{pmatrix},
\]
which is not a free $R$--module.
\end{ex}

\subsection{Cardinality of a module and its dual}

In this section, we prove that for any finite chain ring $R$ and for any finitely generated $R$--module $M$, setting $M^*:=\Hom_R (M,R)$, we have $|M|=|M^*|$, a fact we will use to relate the MRD character of a code and its dual.

\begin{teo}\label{the inequality of cardinals holds for chain rings}
Let $R$ be a chain ring, and let $M$ be a finitely generated $R$--module. Then, we have that $M^*$ is non--canonically isomorphic to $M$ and, in particular, $\lvert M^*\rvert=\lvert M\rvert.$
\end{teo}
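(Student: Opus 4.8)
The plan is to reduce to the structure theorem for finitely generated modules over a finite chain ring and then verify the claim cyclic-summand by cyclic-summand. First I would invoke the fact that, over a finite chain ring $(R,\mathfrak{m})$ with $\mathfrak{m}=R\theta$ and $\theta^s=0$, every finitely generated $R$--module $M$ decomposes as a finite direct sum $M\cong\bigoplus_{i=1}^{r} R/\theta^{a_i}R$ with $1\le a_i\le s$; this is standard for chain rings (or one may cite the accounts used elsewhere in the paper, e.g. via \cite{McDonaldFiniteRingsBook}). Since $\Hom_R(-,R)$ turns finite direct sums into finite direct sums, $M^*\cong\bigoplus_{i=1}^r \Hom_R(R/\theta^{a_i}R,\,R)$, so it suffices to identify $\Hom_R(R/\theta^{a}R,R)$ for a single $a$ with $1\le a\le s$.

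Next I would compute $\Hom_R(R/\theta^{a}R,R)$ explicitly. A homomorphism $\varphi\colon R/\theta^{a}R\to R$ is determined by $x:=\varphi(\bar 1)\in R$, subject to the single constraint $\theta^{a}x=0$, i.e. $x\in\operatorname{Ann}_R(\theta^{a})$. In a finite chain ring the annihilator of $\theta^{a}$ is exactly $\theta^{s-a}R$ (this is the elementary "annihilator is the complementary power of the uniformizer" fact, provable directly from the filtration $R\supset R\theta\supset\cdots\supset R\theta^{s}=0$). Hence $\Hom_R(R/\theta^{a}R,R)\cong \theta^{s-a}R$ as an $R$--module. But multiplication by $\theta^{s-a}$ gives a surjection $R\twoheadrightarrow \theta^{s-a}R$ with kernel $\operatorname{Ann}_R(\theta^{s-a})=\theta^{a}R$, so $\theta^{s-a}R\cong R/\theta^{a}R$ as $R$--modules. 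Therefore $\Hom_R(R/\theta^{a}R,R)\cong R/\theta^{a}R$, and summing over $i$ gives $M^*\cong M$ (non-canonically, since the decomposition and the isomorphism $\theta^{s-a}R\cong R/\theta^a R$ are not canonical). The equality of cardinalities $\lvert M^*\rvert=\lvert M\rvert$ is then immediate; alternatively it also follows directly from $\lvert R/\theta^{a}R\rvert=q^{a}=\lvert\theta^{s-a}R\rvert$ without even invoking the module isomorphism.

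The only real subtlety, and the step I would be most careful about, is the invariant-factor decomposition $M\cong\bigoplus_i R/\theta^{a_i}R$: one must make sure to use the version valid over a general finite chain ring (not just a Galois ring or a PID in the classical sense), and to note that the $a_i$ are the genuine invariants so the decomposition is canonical up to reordering even though no particular splitting map is. Everything after that — the annihilator computation and the identification $\theta^{s-a}R\cong R/\theta^aR$ — is routine manipulation with the ideal chain. One could also streamline the write-up by first disposing of the free case (where $M^*\cong M$ is trivial, being $\Hom_R(R^k,R)\cong R^k$) and then handling the general case via the decomposition; but since the general argument subsumes the free case, I would simply present the decomposition argument once.
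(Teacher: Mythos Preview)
Your proof is correct and follows essentially the same route as the paper: both invoke the structure theorem for finitely generated modules over a finite chain ring (the paper cites \cite{HonoldLandjev2000}), compute $\Hom_R(R/\theta^a R,R)\cong\theta^{s-a}R\cong R/\theta^a R$ via the annihilator, and assemble the result summand by summand. Your write-up is in fact slightly more streamlined, since the paper splits into the cases ``$M$ free,'' ``$M$ has a free direct summand,'' and ``$M$ has no free direct summand,'' whereas you correctly observe that the uniform decomposition argument already subsumes all three.
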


\begin{proof}
First of all, we denote by $N$ the Jacobson radical of $R.$ Thanks to \cite[Theorem 2.1]{HonoldLandjev2000}, we know that $N=R\Theta$ for some $\Theta\in N\setminus N^2,$ and that $R/N\cong\mathbb{F}_q,$ where $q=p^r,$ $p$ is prime and $r\geq 1$ is an integer. Finally, we denote by $m$ the index of nilpotency of $N,$ so $\lvert R\rvert=q^m.$

In this setting, \cite[Theorem 2.2]{HonoldLandjev2000} tells us that $M$ is isomorphic to a direct sum of cyclic $R$--modules; more precisely, we have
\[
M\cong\frac{R}{N^{\lambda_1}}\oplus\ldots\oplus\frac{R}{N^{\lambda_r}},
\]
where $(\lambda_1,\ldots,\lambda_r)$ is a partition of $\log_q (\lvert M\rvert),$ meaning that $\lambda_i\geq\lambda_{i+1},$ that $\lambda_r\neq 0$ and that
\[
\log_q (\lvert M\rvert)=\lambda_1+\ldots+\lambda_r.
\]
On the other hand, we plan to use the well known isomorphism
\[
\Hom_R (R/I,R/J)\cong\frac{(J:_R I)}{J}
\]
in the following way. Notice that the next equality is also stated in \cite[Proposition 1.2]{Bullingtonspir}.
\[
\Hom_R (R/N^{t},R)\cong\frac{((0):_R N^t)}{(0)}=N^{m-t}.
\]
Now, we have to distinguish three different cases.
\begin{enumerate}[(i)]

\item If $M$ is free, then $M\cong R^{\mu_1},$ where $\mu_1=\dim_{R/N}(M/\Theta M).$ In this case, we also have that $M^*\cong R^{\mu_1},$ and therefore $\lvert M\rvert=\lvert M^*\rvert$ holds.

\item If $M$ is not free, but admits a free direct summand. In this case, we can write
\[
M\cong F\oplus M',
\]
where $F$ is a free, finitely generated $R$--module, and $M'$ does not admit any free direct summand. In this case, thanks again to \cite[Theorem 2.2]{HonoldLandjev2000} we can write
\[
M'\cong \frac{R}{N^{\lambda'_1}}\oplus\ldots\oplus\frac{R}{N^{\lambda'_t}},
\]
where $\lambda'_1<m.$ So, thanks to
\[
\Hom_R (R/N^{t},R)\cong\frac{((0):_R N^t)}{(0)}=N^{m-t},
\]
we have that $M^*\cong F^*\oplus N^{m-\lambda'_1}\oplus\ldots\oplus N^{m-\lambda'_t},$ which in turn is equivalent to
\[
M^*\cong F^*\oplus \frac{R}{N^{\lambda'_1}}\oplus\ldots\oplus\frac{R}{N^{\lambda'_t}}.
\]
So, in this case it is also clear that $\lvert M^*\rvert=\lvert M\rvert.$

\item If $M$ is not free and does not admit any free direct summand, then by the same argument done in case (ii) we have that
\[
M^*\cong\frac{R}{N^{\lambda_1}}\oplus\ldots\oplus\frac{R}{N^{\lambda_r}},
\]
and therefore, in particular, we have $\lvert M^*\rvert=\lvert M\rvert.$

\end{enumerate}
The proof is therefore completed.
\end{proof}

\begin{rk}\label{rowscolumns}
We are generalising Ravagnani's approach to obtain an expanded coefficient-wise version of MacWilliams identity, mimicking the counting arguments provided in \cite{ravagnani} but filling in the points where $R$ being an ring, rather than a field, imposes an extra difficulty. However, we are using several counting formulas as in \cite{countingsumbodulesfrobeniusrings} which we must fit in our overall counting procedure. The problem is that the submodules counted in \cite{ravagnani} correspond to the adjoints of the submodules counted in \cite{countingsumbodulesfrobeniusrings}. As we will see in the next subsection, this is not a serious problem, the reason being that for any ring $R$ and integers $1\leq m\leq n$, we have an isomorphism of free $R$--modules $(-)^t:M_{m,n}(R)\to M_{n,m}(R)$ given by matrix transposition.
\end{rk}

Under this isomorphism, for any free $R$--submodule $U\subseteq R^n,$ the $R$--submodule
\[
R_U:=\{X\in M_{m,n}(R):\ \RS(X)\subseteq U\}
\]
corresponds to $\operatorname{Mat}_U (n\times m,R).$ In particular, they have the same rank and the same cardinality.

In the same way, given $\cC\subseteq M_{m,n}(R)$ a code, we have that the adjoint code of $\cC$
\[
\cC^t:=\{Y\in M_{n,m}(R):\ Y^t\in\cC\}
\]
corresponds to $\cC$ by transposition; in particular, once again $\cC$ and $\cC^t$ have the same rank and cardinality. On the other hand, it is worth to observe the following:
\begin{lm}Setting $M_U:=\{X\in M_{m,n}(R):\ U\subseteq\ker (X)\}$, we have that
$$
M_U^t=\Mat_{U^{\perp}}(n\times m,R)\mbox{ and }\cC_U^t=\cC^t\cap\Mat_{U^{\perp}}(n\times m,R).
$$\label{adjlem}
\end{lm}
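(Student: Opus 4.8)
The plan is to reduce both equalities to one elementary fact: for a matrix $X\in M_{m,n}(R)$, the (right) kernel of $X$ is the orthogonal complement of its row space, $\ker(X)=\RS(X)^{\perp}$, where $\perp$ is taken with respect to the standard symmetric pairing $\langle\,,\,\rangle$ on $R^{n}$. Indeed, for $v\in R^{n}$ the $i$-th coordinate of $Xv$ equals $\langle v,\, r_i\rangle$ with $r_i$ the $i$-th row of $X$, so $Xv=0$ if and only if $v$ is orthogonal to every row of $X$, which is exactly $v\in\RS(X)^{\perp}$.

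Granting this, I would unwind the definitions as follows. By definition $M_U=\{X\in M_{m,n}(R):\ U\subseteq\ker(X)\}$, and by the previous paragraph $U\subseteq\ker(X)$ is equivalent to $U\subseteq\RS(X)^{\perp}$. Since the standard pairing on $R^{n}$ is perfect and, for submodules $A,B\subseteq R^{n}$, the three conditions ``$A\subseteq B^{\perp}$'', ``$B\subseteq A^{\perp}$'' and ``$\langle A,B\rangle=0$'' coincide, this says $\RS(X)\subseteq U^{\perp}$. Hence $M_U=\{X\in M_{m,n}(R):\ \RS(X)\subseteq U^{\perp}\}$, which in the notation of Remark \ref{rowscolumns} is precisely $R_{U^{\perp}}$. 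Here one should note that $U^{\perp}\subseteq R^{n}$ is again a free $R$-module (so that writing $\Mat_{U^{\perp}}(n\times m,R)$ is legitimate): this follows from Theorem \ref{over a chain ring free modules is a serre category}, since $R^{n}/U^{\perp}\cong U^{*}$ is free, or equivalently from Lemma \ref{first Ravagnani calculations}. Now applying transposition and using $\RS(X)=\CS(X^{t})$ gives
\[
M_U^{t}=\{X^{t}:\ \RS(X)\subseteq U^{\perp}\}=\{Y\in M_{n,m}(R):\ \CS(Y)\subseteq U^{\perp}\}=\Mat_{U^{\perp}}(n\times m,R),
\]
which is the first claimed equality (alternatively, this is the transpose dictionary $R_{U^{\perp}}^{t}=\Mat_{U^{\perp}}(n\times m,R)$ already recorded after Remark \ref{rowscolumns}).

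For the second equality, recall from Lemma \ref{shortening as intersection} that $\cC_U=\cC\cap M_U$ directly from the definition of the shortening. Since $(-)^{t}\colon M_{m,n}(R)\to M_{n,m}(R)$ is a bijection it commutes with intersections, so
\[
\cC_U^{t}=(\cC\cap M_U)^{t}=\cC^{t}\cap M_U^{t}=\cC^{t}\cap\Mat_{U^{\perp}}(n\times m,R),
\]
as desired. I do not expect any genuine obstacle here: the whole content is the identification $\ker(X)=\RS(X)^{\perp}$, and everything else is bookkeeping with transposes; the only point deserving a word of care is the freeness of $U^{\perp}$, which is dispatched by a one-line appeal to Theorem \ref{over a chain ring free modules is a serre category}.
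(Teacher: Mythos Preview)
Your proof is correct and follows essentially the same route as the paper's: both reduce the first equality to the identity $\ker(X)=\RS(X)^{\perp}$ (equivalently $\ker(Y^{t})^{\perp}=\CS(Y)$) and then read off the result after transposing, and the second equality is in both cases immediate from $\cC_U=\cC\cap M_U$. Your version is simply more explicit---you spell out why $A\subseteq B^{\perp}\Leftrightarrow B\subseteq A^{\perp}$ and why $U^{\perp}$ is free---while the paper compresses the whole argument into a single displayed chain of equalities.
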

\begin{proof}This is immediate since 
\[
M_U^t=\{Y\in M_{n,m}(R):\ Y^t\in M_U\}=\{Y:\ \ker (Y^t)^{\perp}\subseteq U^{\perp}\}=\{Y:\ \CS(Y)\subseteq U^{\perp}\}.\qedhere
\]
\end{proof}
\begin{rk}Notice that $\cC^t\cap\Mat_{U^{\perp}}(n\times m,R)$ is denoted $\cC^t (U^{\perp})$ in \cite[Definition 2.4]{ByrneCotardoRavaganizetafunctions}.
\end{rk}
The next technical result will also be useful in what follows.

\begin{lm}\label{row space can be viewed as a shortening}
Let $R$ be a chain ring, let $\cC\subseteq M_{m,n}(R)$ be a rank metric code, and let $U\subseteq R^n$ be a free $R$--submodule of rank $n-u.$ Then, setting
\[
M_U:=\{X\in M_{m,n}(R):\ U\subseteq\ker (X)\},\quad R_U:=\{X\in M_{m,n}(R):\ \RS(X)\subseteq U\},
\]
we have that $R_U=M_{U^{\perp}}$ and $\mathcal{C}_{U^{\perp}}^{\perp}=\mathcal{C}^{\perp}\cap M_{U^{\perp}}.$
\end{lm}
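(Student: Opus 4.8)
The whole statement reduces to the first equality $R_U = M_{U^{\perp}}$: granting it, the second equality follows by applying Lemma~\ref{shortening as intersection} to the code $\cC^{\perp}$ and to the submodule $U^{\perp}$ (which is free, as will be seen below), and then rewriting the resulting $M_{U^{\perp}}$ as $R_U$. So the plan is to prove $R_U = M_{U^{\perp}}$ first and assemble afterwards. The first ingredient is the pointwise description of the kernel: for every $X \in M_{m,n}(R)$ one has $\ker(X) = \RS(X)^{\perp}$ inside $R^n$, the orthogonal complement being taken for the standard bilinear form. Indeed $Xv = 0$ holds precisely when every row of $X$ is orthogonal to $v$, and the rows of $X$ generate $\RS(X)$.

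The one genuine point is the double--orthogonal identity $(U^{\perp})^{\perp} = U$ for the free submodule $U$. I would argue it as follows. Since $U$ is free, Theorem~\ref{over a chain ring free modules is a serre category} gives that $R^n/U$ is free, say of rank $u = n - \rank(U)$; applying $\Hom_R(-,R)$ to the short exact sequence $0 \to U \to R^n \to R^n/U \to 0$ and identifying $\Hom_R(R^n,R)$ with $R^n$ through the (unimodular) standard form identifies $U^{\perp}$ with $\Hom_R(R^n/U,R)$, so $U^{\perp}$ is free of rank $u$. Running the same argument once more, $(U^{\perp})^{\perp}$ is free of rank $n - u = \rank(U)$; since $U \subseteq (U^{\perp})^{\perp}$ trivially, Corollary~\ref{for chain rings rank is inclusion preserving}(ii) forces $U = (U^{\perp})^{\perp}$. (Alternatively, one may invoke that a finite chain ring is Frobenius, so that $C^{\perp\perp} = C$ for every submodule $C \subseteq R^n$.) With these facts in hand, for $X \in M_{m,n}(R)$ one has the chain of equivalences
\[
X \in M_{U^{\perp}} \iff U^{\perp} \subseteq \ker(X) = \RS(X)^{\perp} \iff \RS(X) \subseteq (U^{\perp})^{\perp} = U \iff X \in R_U,
\]
the middle equivalence being the symmetry $A \subseteq B^{\perp} \Leftrightarrow B \subseteq A^{\perp}$ of orthogonality; this proves $R_U = M_{U^{\perp}}$.

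For the second equality, $U^{\perp}$ is a free submodule of $R^n$ of rank $u$ by the previous paragraph, and $\cC^{\perp}$ is again a code, of rank $mn - \rank(\cC)$, as recalled around \eqref{nondeg}. Applying Lemma~\ref{shortening as intersection} to the code $\cC^{\perp}$ and the free submodule $U^{\perp}$ shows that the shortening $\cC_{U^{\perp}}^{\perp} = \{X \in \cC^{\perp} : U^{\perp} \subseteq \ker(X)\}$ equals $\cC^{\perp} \cap M_{U^{\perp}}$, with $M_{U^{\perp}}$ free of rank $(n-u)m$; and by the first equality this common set is also $\cC^{\perp} \cap R_U = \{X \in \cC^{\perp} : \RS(X) \subseteq U\}$, which is precisely the assertion that this row-space restriction of $\cC^{\perp}$ is a shortening. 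The main obstacle is thus the identity $(U^{\perp})^{\perp} = U$ established above (equivalently, the Frobenius property of $R$); everything else is a bookkeeping of which orthogonal complement is taken in which ambient module, together with the appeal to Lemma~\ref{shortening as intersection}. One could instead derive $R_U = M_{U^{\perp}}$ from Lemma~\ref{adjlem} applied to $U^{\perp}$ and the identification of $R_U$ with $\Mat_U(n \times m, R)$ recorded just before that lemma, but the direct argument above is self-contained.
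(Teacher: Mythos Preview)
Your proof is correct and follows essentially the same approach as the paper: identify $\ker(X)=\RS(X)^{\perp}$, pass from $\RS(X)\subseteq U$ to $U^{\perp}\subseteq\RS(X)^{\perp}$, and then invoke Lemma~\ref{shortening as intersection} for the second equality. The only difference is that you are more careful than the paper in two places: you explicitly justify the double--orthogonal identity $(U^{\perp})^{\perp}=U$ (which the paper uses tacitly when writing $\{X:\RS(X)\subseteq U\}=\{X:U^{\perp}\subseteq\RS(X)^{\perp}\}$ as an equality of sets), and you verify that $U^{\perp}$ is free before applying Lemma~\ref{shortening as intersection} to it.
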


\begin{proof}
First of all, notice that
\[
R_U=\{X\in M_{m,n}(R):\ \RS(X)\subseteq U\}=\{X\in M_{m,n}(R):\ U^{\perp}\subseteq\RS(X)^{\perp}\}.
\]
Moreover, since $\RS(X)^{\perp}=\ker (X)$ we have that $R_U=\{X\in M_{m,n}(R):\ U^{\perp}\subseteq\ker (X)\}=M_{U^{\perp}},$ as claimed. On the other hand, the equality $\mathcal{C}_{U^{\perp}}^{\perp}=\mathcal{C}^{\perp}\cap M_{U^{\perp}}$ follows immediately from Lemma \ref{shortening as intersection}.
\end{proof}

The next result may be regarded as a generalization of \cite[Lemma 28]{ravagnani} in the setting of chain rings.

\begin{lm}\label{Lemma 28 ravagnani: correct proof}
Let $R$ be a chain ring, and let $U\subseteq R^n$ be a free $R$--submodule of rank $n-u.$ Then, we have 
\begin{equation*}
|\cC_U|=\dfrac{|\cC||\cC^{\perp}_{U^{\perp}}|}{|R|^{m(n-u)}}
\end{equation*}
\end{lm}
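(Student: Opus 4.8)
\emph{Strategy.} The plan is to follow Ravagnani's counting proof of \cite[Lemma~28]{ravagnani}, replacing the field facts $\dim A+\dim A^{\perp}=mn$ and $(A\cap B)^{\perp}=A^{\perp}+B^{\perp}$ by their ring-theoretic analogues, which over a finite chain ring are supplied by Theorem \ref{the inequality of cardinals holds for chain rings}. The first — and, I expect, the main — step is to prove that for \emph{every} submodule $A\subseteq M_{m,n}(R)$ one has $|A|\cdot|A^{\perp}|=|R|^{mn}$ and $A^{\perp\perp}=A$. To see this, note that $M_{m,n}(R)\cong R^{mn}$ and the matrices $E_{ij}$ form a $b$--orthonormal basis, so $b$ identifies $M_{m,n}(R)$ with $\Hom_R(M_{m,n}(R),R)$; restricting functionals along the inclusion $A\hookrightarrow M_{m,n}(R)$ then gives an $R$--linear map $M_{m,n}(R)\to A^{*}:=\Hom_R(A,R)$ whose kernel is exactly $A^{\perp}$. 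This map is surjective because a finite chain ring is quasi--Frobenius, hence self--injective, so every homomorphism $A\to R$ extends to a homomorphism $M_{m,n}(R)\to R$. Therefore $M_{m,n}(R)/A^{\perp}\cong A^{*}$, and Theorem \ref{the inequality of cardinals holds for chain rings} gives $|M_{m,n}(R)/A^{\perp}|=|A^{*}|=|A|$, i.e.\ $|A|\cdot|A^{\perp}|=|R|^{mn}$; since $A\subseteq A^{\perp\perp}$ are finite modules of equal cardinality, $A=A^{\perp\perp}$.

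From this one deduces the equality $(A\cap B)^{\perp}=A^{\perp}+B^{\perp}$ for arbitrary submodules $A,B\subseteq M_{m,n}(R)$: indeed $(A^{\perp}+B^{\perp})^{\perp}=A^{\perp\perp}\cap B^{\perp\perp}=A\cap B$ (the first equality being formal), and applying $(-)^{\perp}$ once more together with the reflexivity of the submodule $A^{\perp}+B^{\perp}$ just proved yields $(A\cap B)^{\perp}=(A^{\perp}+B^{\perp})^{\perp\perp}=A^{\perp}+B^{\perp}$.

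Now I would specialise to $A=\mathcal{C}$ and $B=M_U$. By Lemma \ref{shortening as intersection}, $\mathcal{C}_U=\mathcal{C}\cap M_U$ and $M_U$ is free of rank $um$, so $|M_U^{\perp}|=|R|^{mn}/|M_U|=|R|^{m(n-u)}$ by the first step. I also need $M_U^{\perp}=M_{U^{\perp}}$: if $X\in M_U$ then $U\subseteq\ker X$ forces every row of $X$ to lie in $U^{\perp}$, i.e.\ $\RS(X)\subseteq U^{\perp}$, hence $b(X,Y)=\sum_i\langle\operatorname{row}_i(X),\operatorname{row}_i(Y)\rangle=0$ whenever $\RS(Y)\subseteq U$; this shows $R_U\subseteq M_U^{\perp}$, and since $R_U=M_{U^{\perp}}$ has cardinality $|R|^{m(n-u)}=|M_U^{\perp}|$ by Lemma \ref{row space can be viewed as a shortening} (and Lemma \ref{shortening as intersection}), the inclusion is an equality. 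Consequently $\mathcal{C}^{\perp}\cap M_U^{\perp}=\mathcal{C}^{\perp}\cap M_{U^{\perp}}=\mathcal{C}^{\perp}_{U^{\perp}}$, the last equality being Lemma \ref{row space can be viewed as a shortening} again. Combining the two steps above with the second isomorphism theorem $|X+Y|=|X|\,|Y|/|X\cap Y|$ gives
\[
|R|^{mn}=|\mathcal{C}_U|\cdot|(\mathcal{C}\cap M_U)^{\perp}|=|\mathcal{C}_U|\cdot|\mathcal{C}^{\perp}+M_U^{\perp}|=|\mathcal{C}_U|\cdot\frac{|\mathcal{C}^{\perp}|\cdot|M_U^{\perp}|}{|\mathcal{C}^{\perp}\cap M_U^{\perp}|},
\]
whence, using $|\mathcal{C}^{\perp}|=|R|^{mn}/|\mathcal{C}|$ and $|M_U^{\perp}|=|R|^{m(n-u)}$,
\[
|\mathcal{C}_U|=\frac{|R|^{mn}\cdot|\mathcal{C}^{\perp}_{U^{\perp}}|}{|\mathcal{C}^{\perp}|\cdot|M_U^{\perp}|}=\frac{|\mathcal{C}|\cdot|\mathcal{C}^{\perp}_{U^{\perp}}|}{|R|^{m(n-u)}},
\]
which is the claimed identity.

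The place where $R$ being a ring rather than a field really bites is the first step: for a \emph{free} submodule $A$ the orthogonal decomposition $M_{m,n}(R)=A\perp A^{\perp}$ of \cite[(3.6.2)]{bookquadraticformsoverrings} already yields $|A|\,|A^{\perp}|=|R|^{mn}$, but here the submodules $\mathcal{C}\cap M_U=\mathcal{C}_U$ and $\mathcal{C}^{\perp}+M_U^{\perp}$ need not be free (cf.\ Example \ref{shortening is not always free}), so one genuinely needs the cardinality count $|A^{*}|=|A|$ of Theorem \ref{the inequality of cardinals holds for chain rings} together with self--injectivity of $R$ to run the perp-counting for arbitrary submodules. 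Once that is in place, everything else is bookkeeping with orders of finite abelian groups.
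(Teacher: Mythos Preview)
Your argument is correct, but it follows a genuinely different route from the paper's own proof. The paper does not develop the general perp calculus $|A|\cdot|A^{\perp}|=|R|^{mn}$, $A^{\perp\perp}=A$, $(A\cap B)^{\perp}=A^{\perp}+B^{\perp}$ for arbitrary submodules. Instead it restricts the trace form to a single explicit pairing $b:\cC\times R_U\to R$, checks directly that $\lker(b)=\cC_U$ and $\rker(b)=R_U\cap\cC^{\perp}=\cC^{\perp}_{U^{\perp}}$, and observes that the induced form on $\cC/\cC_U\times R_U/\cC^{\perp}_{U^{\perp}}$ is non--degenerate; adjunction then gives injections into each other's $R$--dual, and Theorem~\ref{the inequality of cardinals holds for chain rings} turns these sandwich inequalities into the equality $|\cC/\cC_U|=|R_U/\cC^{\perp}_{U^{\perp}}|$, from which the formula follows since $|R_U|=|R|^{m(n-u)}$. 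The trade--offs are clear: the paper's proof stays entirely within the material already set up (no appeal to self--injectivity or the quasi--Frobenius property of $R$, which you import from outside to get surjectivity of $M_{m,n}(R)\to A^{*}$), while your approach front--loads a stronger and reusable package of identities --- reflexivity of $(-)^{\perp}$ on \emph{all} submodules, the de~Morgan rule for $\perp$, and $M_U^{\perp}=M_{U^{\perp}}$ --- after which the lemma is pure order arithmetic. Both proofs ultimately hinge on Theorem~\ref{the inequality of cardinals holds for chain rings}; you use it once for a generic submodule $A$, the paper uses it twice for the specific quotients appearing in the pairing.
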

\begin{proof}
Begin by defining $R_U:=\{ X\in M_{m,n}(R) | \RS(X)\subseteq U \}$. Next, define the bilinear map $b:\cC\times R_U\to R$ by the assignment $(X,Y)\longmapsto\tr(XY^t).$ We claim that the left nullspace is the left kernel of $b.$ Indeed, we have
\[
\lker(b)=\{X\in\cC| \forall Y\in R_U, \tr(XY^t)=0\}=\{X\in\cC| \forall y\in U, y\in\ker (X)\}.
\]
On the other hand, we can compute the right kernel as follows:
\[
\rker(b)=\{Y\in R_U| \forall X\in\cC, \tr(XY^t)=0\}=R_U\cap\cC^{\perp}= \{Y\in \cC^{\perp}| U^{\perp}\subseteq\ker (Y)\}= \cC^{\perp}_{U^{\perp}},
\]
where notice that the equality $\{Y\in  \cC^{\perp}| U^{\perp}\subseteq\ker (Y)\}= \cC^{\perp}_{U^{\perp}}$ follows immediately from Lemma \ref{row space can be viewed as a shortening}. Therefore $\lker(b)=\cC_U$ and $\rker(b)=\cC^{\perp}_{U^{\perp}}$ and thus the bilinear map induced by $b$
\[
\frac{\cC}{\cC_U} \times \frac{R_U}{\cC^{\perp}_{U^{\perp}}} \rightarrow R
\]
is non-degenerate. Therefore, by adjunction we obtain inclusions $\cC/\cC_U \hookrightarrow  (R_U/\cC^{\perp}_{U^{\perp}})^*$ and $R_U/\cC^{\perp}_{U^{\perp}}\hookrightarrow (\cC/\cC_U)^*$.

In particular, we have $\lvert \cC/\cC_U\rvert\leq\lvert(R_U/\cC^{\perp}_{U^{\perp}})^*\rvert$ and $\lvert(R_U/\cC^{\perp}_{U^{\perp}})\rvert\leq\lvert (\cC/\cC_U)^*\rvert.$ On the other hand, Proposition \ref{the inequality of cardinals holds for chain rings} shows that for any finitely generated $R$--module $M$ we have $\lvert M\rvert=\lvert M^*\rvert$. In this way, combining all these facts we have
\[
\lvert \cC/\cC_U\rvert\leq\lvert(R_U/\cC^{\perp}_{U^{\perp}})^*\rvert=\lvert(R_U/\cC^{\perp}_{U^{\perp}})\rvert\leq\lvert (\cC/\cC_U)^*\rvert=\lvert\cC/\cC_U\rvert.
\]
Summing up, we have $\lvert \cC/\cC_U\rvert=\lvert(R_U/\cC^{\perp}_{U^{\perp}})\rvert,$ and therefore we finally conclude
\[
\frac{\abs{\cC}}{\abs{\cC_U}}=\frac{\abs{R_U}}{\abs{\cC^{\perp}_{U^{\perp}}}},
\]
as needed.
\end{proof}
The reader familiar with \cite[Lemma 28]{ravagnani} could ask why Lemma \ref{Lemma 28 ravagnani: correct proof} recovers and extends op.cit, mainly the submodule denoted $R_U$ therein corresponds, under transposition, with our $\Mat_U(n\times m,R)$ and viceversa. However, both results are equivalent, as we advanced in Remark \ref{rowscolumns} and prove next. 

\begin{lm}\label{true version of Ravagnani Lemma 28}
Let $R$ be a chain ring, and let $U \subseteq R^n$ be a free $R$--submodule of rank $n-u.$ Then, we have that
\[
\abs{\cC\cap\Mat_U (n\times m,R)}=\frac{\abs{\cC}\abs{\cC^{\perp}\cap\Mat_{U^{\perp}}(n\times m,R)}}{\abs{R}^{mu}}.
\]
Equivalently, using the notation established in \cite[Definition 2.4]{ByrneCotardoRavaganizetafunctions} we have
\[
\abs{\cC (U)}=\frac{\abs{\cC}\abs{\cC^{\perp}(U^{\perp})}}{\abs{R}^{mu}}.
\]
\end{lm}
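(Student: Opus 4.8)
The plan is to deduce Lemma \ref{true version of Ravagnani Lemma 28} from Lemma \ref{Lemma 28 ravagnani: correct proof} by transposition, so that no counting work is redone. First I would apply Lemma \ref{Lemma 28 ravagnani: correct proof} not to the code $\cC$ directly but to its adjoint $\cC^t\subseteq M_{n,m}(R)$ (note $n$ and $m$ swap roles, which is harmless since Lemma \ref{Lemma 28 ravagnani: correct proof} holds for arbitrary dimensions), together with the free submodule $U\subseteq R^n$ of rank $n-u$. That lemma then gives
\[
|(\cC^t)_U|=\frac{|\cC^t|\,|(\cC^t)^{\perp}_{U^{\perp}}|}{|R|^{m(n-u)}}.
\]

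Next I would translate each of the three terms back through transposition. By Lemma \ref{adjlem} we have $(\cC^t)_U^{t}=\cC\cap\Mat_{U^{\perp}}(m\times n,R)$; more conveniently, applying Lemma \ref{adjlem} in the orientation that matches the statement, $(\cC^t)_U$ is carried by $(-)^t$ to $\cC\cap\Mat_{U^{\perp}}(n\times m,R)$ — here one must be careful to track which of $U$, $U^{\perp}$ appears, since the kernel-to-row-space passage in the proof of Lemma \ref{adjlem} introduces a perp. The cleanest route is: replace $U$ by $U^{\perp}$ throughout Lemma \ref{Lemma 28 ravagnani: correct proof} applied to $\cC^t$, use $(U^{\perp})^{\perp}=U$ (valid because $b$ is nondegenerate and $U$ is free, so $\Mat_{m,n}(R)=\cC\perp\cC^{\perp}$ forces $U^{\perp}$ free and the double-perp identity), and then Lemma \ref{adjlem} turns $(\cC^t)_{U^{\perp}}$ into $\cC^t{}^t\cap\Mat_{(U^{\perp})^{\perp}}(n\times m,R)=\cC\cap\Mat_U(n\times m,R)$ after a final transpose. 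Since transposition is an $R$-module isomorphism $M_{m,n}(R)\cong M_{n,m}(R)$, it preserves cardinality, and by Remark \ref{rowscolumns} it also preserves ranks; likewise $|\cC^t|=|\cC|$ and $|\cC^{\perp}{}^t|=|\cC^{\perp}|$, and $(\cC^t)^{\perp}=(\cC^{\perp})^t$ because $b$ is compatible with the involution.

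Assembling these identifications, the right-hand side of the displayed equation becomes $|\cC|\,|\cC^{\perp}\cap\Mat_{U^{\perp}}(n\times m,R)|/|R|^{mu}$ — the exponent changing from $m(n-u)$ to $mu$ precisely because the substitution $U\leadsto U^{\perp}$ replaces the rank $n-u$ of the shortening module by $u$. This yields the first displayed formula of the statement, and the second is just a restatement in the $\cC(U)$-notation of \cite[Definition 2.4]{ByrneCotardoRavaganizetafunctions}. The main obstacle is purely bookkeeping: keeping straight, at each invocation of Lemma \ref{adjlem} and Lemma \ref{Lemma 28 ravagnani: correct proof}, whether the relevant free submodule is $U$ or $U^{\perp}$ and whether the ambient matrix shape is $m\times n$ or $n\times m$, so that the perps cancel in pairs and the dimension in the exponent lands on $mu$ rather than $m(n-u)$. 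No genuinely new estimate is needed beyond Theorem \ref{the inequality of cardinals holds for chain rings}, which is already absorbed into Lemma \ref{Lemma 28 ravagnani: correct proof}.
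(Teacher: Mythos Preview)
Your proposal is correct and follows essentially the same route as the paper: apply Lemma \ref{Lemma 28 ravagnani: correct proof} to the adjoint code $\cC^t$ with the submodule $U^{\perp}$, then use Lemma \ref{adjlem} (together with $(\cC^t)^{\perp}=(\cC^{\perp})^t$ and $(U^{\perp})^{\perp}=U$) to translate each term back through transposition, which explains the exponent $mu$. The paper's proof is terser and invokes Lemma \ref{first Ravagnani calculations} for the final identification, but the underlying argument and the bookkeeping you flag are identical.
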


\begin{proof}
First of all, from Remark \ref{rowscolumns} and Lemma \ref{adjlem}, we see that the adjoint of the code $\cC\cap\Mat_U (n\times m,R)$ is precisely $(\cC^t)_{U^{\perp}}.$ Now, using Lemma \ref{Lemma 28 ravagnani: correct proof} we obtain
\[
\abs{\cC\cap\Mat_U (n\times m,R)}=\abs{(\cC^t)_{U^{\perp}}}=\frac{\abs{\cC^t}\abs{((\cC^t)_{U})^{\perp}}}{\abs{R}^{mu}}.
\]
Moreover, we have that $\abs{\cC}=\abs{\cC^t}$ and, on the other hand, from Lemma \ref{first Ravagnani calculations} we obtain that the adjoint of $((\cC^t)_{U})^{\perp}$ is $\cC^{\perp}\cap\Mat_{U^{\perp}}(n\times m,R).$ Taking into account these facts we finally conclude that
\[
\abs{\cC\cap\Mat_U (n\times m,R)}=\frac{\abs{\cC}\abs{\cC^{\perp}\cap\Mat_{U^{\perp}}(n\times m,R)}}{\abs{R}^{mu}},
\]
as claimed.
\end{proof}

\subsection{Binomial moments of the rank distributions of a code and its dual}
In this subsection we define the binomial moments of the rank distribution and put the previous results together to relate the coefficients of the weight enumerator of a code over a Galois ring with those of its dual, generalising \cite{ravagnani} to this setting. Moreover, as a byproduct we will show that the dual of an MRD code over a Galois ring is also MRD. 

\begin{df}
Let $u\geq 0$ be a non--negative integer, then the binomial moment of $\cC$ is
\[
B_u(\cC)=\sum_{\substack{\rank(U)=n-u\\ U \text{ free in } R^n}}(\lvert\cC_U\rvert-1).
\]
\end{df}

The next technical result is a quite direct consequence of Lemma \ref{counting free submodules in an Artinian local ring}.

\begin{lm}\label{Lemma 29 Rav}
Let $R$ be a finite local ring with maximal ideal $\mathfrak{m}$ and residue field of $q$ elements, let $0\leq t,s \leq k$ and let $X\subseteq R^k$ free with $\rank(X)=t$. The number of free submodules $U\subseteq R^k$ such that $X\subseteq U$ and $\rank(U)=s$ is $$\stirling{k-t}{s-t}.$$
\end{lm}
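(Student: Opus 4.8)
The plan is to reduce the counting problem over $R$ to the counting formula already established in Lemma \ref{counting free submodules in an Artinian local ring}, by passing to a quotient module. First I would extend a free basis $\{x_1,\dots,x_t\}$ of $X$ to a free basis $\{x_1,\dots,x_t,x_{t+1},\dots,x_k\}$ of $R^k$, which is possible by the Steinitz-type result Corollary \ref{Basis extension Lemma}. Having fixed such a basis, a free submodule $U$ with $X\subseteq U$ and $\rank(U)=s$ corresponds, under the canonical projection $\pi:R^k\to R^k/X$, to a free submodule $\overline{U}=U/X=\pi(U)$ of $R^k/X$ of rank $s-t$; note that $R^k/X$ is free of rank $k-t$ by Theorem \ref{over a chain ring free modules is a serre category}, and $U/X$ is free because $X$ is a free (hence, again by Theorem \ref{over a chain ring free modules is a serre category}, a direct) summand of the free module $U$, so $U\cong X\oplus U/X$ forces $U/X$ free.

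The heart of the argument is to check that $U\mapsto U/X$ is a bijection between $\{U : X\subseteq U\subseteq R^k \text{ free}, \rank(U)=s\}$ and $\{\overline{U}\subseteq R^k/X \text{ free}, \rank(\overline{U})=s-t\}$. Injectivity is clear, since $U=\pi^{-1}(\overline{U})$ whenever $X\subseteq U$. For surjectivity, given a free $\overline{U}\subseteq R^k/X$ of rank $s-t$, set $U:=\pi^{-1}(\overline{U})$; then $X\subseteq U$, and the short exact sequence $0\to X\to U\to\overline{U}\to 0$ splits (both ends are free, so by Theorem \ref{over a chain ring free modules is a serre category} applied inside $U$, or just because $\overline{U}$ is free), giving $U\cong X\oplus\overline{U}$, which is free of rank $t+(s-t)=s$. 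Once the bijection is in place, Lemma \ref{counting free submodules in an Artinian local ring} applied to the free $R$--module $R^k/X$ of rank $k-t$ gives that the number of free rank-$(s-t)$ submodules of $R^k/X$ equals $\stirling{k-t}{s-t}$, and the proof is complete.

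The main obstacle is making sure that every module appearing is genuinely free, so that both the counting formula of Lemma \ref{counting free submodules in an Artinian local ring} and the splitting arguments apply; this is precisely where the chain-ring hypothesis is used, through Theorem \ref{over a chain ring free modules is a serre category} and Corollary \ref{Basis extension Lemma}. A minor point worth spelling out is that the quantity $\stirling{k-t}{s-t}$ defined via Lemma \ref{counting free submodules in an Artinian local ring} depends only on $k-t$, $s-t$, $q$ and $|\mathfrak{m}|$, so it is unambiguous to write it without reference to which free module of rank $k-t$ we used; in fact $R^k/X\cong R^{k-t}$. Everything else is routine.
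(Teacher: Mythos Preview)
Your proposal is correct and follows essentially the same route as the paper: reduce to counting free rank-$(s-t)$ submodules of $R^k/X\cong R^{k-t}$ via the bijection $U\mapsto U/X$, then invoke Lemma~\ref{counting free submodules in an Artinian local ring}. If anything, your version is more complete than the paper's, since you explicitly verify surjectivity of the bijection (that $\pi^{-1}(\overline{U})$ is free of the right rank via the split exact sequence), a step the paper leaves implicit; note also that both arguments really use the chain-ring hypothesis through Theorem~\ref{over a chain ring free modules is a serre category}, even though the lemma is stated for a finite local ring.
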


\begin{proof}
Let $U\subseteq R^k$ be a free $R$--submodule such that $X\subseteq U$ and $\rank (U)=s.$ Consider the following short exact sequence.
\[
0\to X\to U\to U/X\to 0.
\]
By Theorem \ref{over a chain ring free modules is a serre category}, $U/X$ is a free $R$--submodule of $R^k/X$ of rank $s-t.$ This shows, in particular, that there is a bijection between free submodules $U\subseteq R^k$ such that $X\subseteq U$ and $\rank(U)=s$ with free submodules $U/X$ of $R^k/X$ with $\rank (R^k/U)=s-t.$ And the cardinality of this set is, by Lemma \ref{counting free submodules in an Artinian local ring}, exactly
\[
\stirling{k-t}{s-t},
\]
as claimed.
\end{proof}

\begin{lm}\label{this is Lemma 30 in Ravagnani}
Let $R$ be a  chain ring, let $\cC\subseteq M_{m,n}(R)$ be a free $R$--linear rank metric code with $(A_i)$ rank distribution, let $0\leq s\leq m$, and for each free $R$--submodule $U\subseteq R^m,$ set $\operatorname{Mat}_U(m\times n, R):=\{M\in M_{m,n}(R)| \CS(M)\subseteq U\}.$ Then, we have that
\begin{equation}\label{equality in Ravagnani Lemma 30}
\sum_{\substack{U\subseteq R^m\\ \rank(U)=s}} \lvert\cC\cap\operatorname{Mat}_U(m\times n, R)\rvert= \sum_{i=0}^m A_i \stirling{m-i}{m-s}.
\end{equation}
\end{lm}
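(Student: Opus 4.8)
The plan is to prove the identity by counting in two ways the set
\[
\mathcal{S}:=\bigl\{(U,X)\ :\ U\subseteq R^m\text{ free with }\rank(U)=s,\ X\in\cC,\ \CS(X)\subseteq U\bigr\}.
\]
Projecting $\mathcal{S}$ onto its first coordinate, the fibre over a free $U$ of rank $s$ is exactly $\cC\cap\operatorname{Mat}_U(m\times n,R)$ by the definition of $\operatorname{Mat}_U$, so $\lvert\mathcal{S}\rvert$ equals the left-hand side of \eqref{equality in Ravagnani Lemma 30}. Projecting onto the second coordinate, $\lvert\mathcal{S}\rvert=\sum_{X\in\cC}\bigl\lvert\{\,U\subseteq R^m\text{ free},\ \rank(U)=s,\ \CS(X)\subseteq U\,\}\bigr\rvert$; hence the lemma reduces to showing that this inner count equals $\stirling{m-i}{m-s}$ whenever $\rank(X)=i$, as grouping the codewords by rank then turns the sum into $\sum_{i=0}^m A_i\stirling{m-i}{m-s}$.

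For the inner count I would first record the symmetry $\stirling{a}{b}=\stirling{a}{a-b}$ for $0\le b\le a$, which is immediate from $\binomii{a}{b}_q=\binomii{a}{a-b}_q$ together with the invariance of the exponent $b(a-b)$ under $b\leftrightarrow a-b$; thus $\stirling{m-i}{m-s}=\stirling{m-i}{s-i}$ and it suffices to count the free rank-$s$ submodules of $R^m$ containing $\CS(X)$. \textbf{When $\CS(X)$ is itself free} of rank $i$, this is exactly Lemma \ref{Lemma 29 Rav} with $k=m$ and $t=i$: such a $U$ corresponds via $U\mapsto U/\CS(X)$ to a free rank-$(s-i)$ submodule of $R^m/\CS(X)$, the latter being free of rank $m-i$ by Theorem \ref{over a chain ring free modules is a serre category}, and those are counted by Lemma \ref{counting free submodules in an Artinian local ring}. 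Feeding this value back into the double count yields \eqref{equality in Ravagnani Lemma 30}.

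\textbf{The main obstacle} is that over a chain ring $\CS(X)$ need not be free — this is precisely the phenomenon in Example \ref{shortening is not always free}, which exhibits a rank-one matrix with non-free column space — so Lemma \ref{Lemma 29 Rav} cannot be applied verbatim. The plan to get around this is to show that the number of free rank-$s$ overmodules of $\CS(X)$ still depends only on $i=\rank(X)=\mu_R(\CS(X))$ and equals the free-module value, either by producing for each $X$ a free rank-$i$ submodule of $R^m$ contained in exactly the same free rank-$s$ submodules as $\CS(X)$, or by carrying out the count directly via the structure of the quotient $R^m/\CS(X)$ over the chain ring. I expect this passage from the non-free case to the free case to be where essentially all the work lies; once it is in hand, the double-counting skeleton above finishes the proof immediately.
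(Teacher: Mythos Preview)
Your double-counting set $\mathcal{S}$ and the two projections are exactly what the paper does: it calls the set $A(\cC,s)$, obtains the left-hand side by summing over $U$, and for the right-hand side groups codewords by rank and invokes Lemma~\ref{Lemma 29 Rav} to declare the inner count equal to $\stirling{m-i}{s-i}=\stirling{m-i}{m-s}$. The paper does \emph{not} address the freeness of $\CS(M)$; it simply applies Lemma~\ref{Lemma 29 Rav} as if that hypothesis were met. So on the level of strategy you and the paper agree completely, and you have correctly located the one point the paper glosses over. (A minor correction: Example~\ref{shortening is not always free} does not exhibit a non-free column space---the matrix $G$ there has $\CS(G)=R(1,0)^{t}$, which is free; that example concerns a non-free \emph{shortening}. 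A genuine instance is $M=\left(\begin{smallmatrix}2\\0\end{smallmatrix}\right)$ over $R=\Z/4\Z$.)

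The trouble is that your proposed patch cannot work, because the claim it rests on is false: the number of free rank-$s$ submodules of $R^m$ containing $\CS(X)$ does \emph{not} depend only on $\mu_R(\CS(X))$. Over $R=\Z/4\Z$ with $m=2$ and $X=\left(\begin{smallmatrix}2\\0\end{smallmatrix}\right)$, one has $\CS(X)=\{(0,0),(2,0)\}$ with $\mu_R(\CS(X))=1$, and the free rank-$1$ submodules of $R^{2}$ containing it are $R(1,0)$ and $R(1,2)$---two of them, whereas $\stirling{m-i}{s-i}=\stirling{1}{0}=1$. Worse, this already breaks the identity itself: taking $n=1$, $\cC=M_{2,1}(R)$ and $s=1$, the left side of \eqref{equality in Ravagnani Lemma 30} equals $\sum_{U}\lvert U\rvert=6\cdot 4=24$ (six free lines in $R^{2}$, each of cardinality~$4$), while the right side is $A_0\stirling{2}{1}+A_1\stirling{1}{1}=1\cdot 6+15\cdot 1=21$. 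Thus the obstacle you flagged is not a gap to be filled but a genuine failure of the stated equality over chain rings that are not fields; neither your plan nor the paper's argument can be completed as written.
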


\begin{proof}
Set $A(\cC,s):=\{(U,M)|U\subseteq R^m, \rank(U)=s,M\in \cC,\  \CS(M)\subseteq U\}.$ Now, we count the size of $A(\cC,s)$ in two ways. On the one hand, by Lemma \ref{Lemma 29 Rav}, 
\begin{align*}
|A(\cC,s)|&=\sum_{M\in\cC} |\{U\subseteq R^m| \rank(U)=s, \CS(M)\subseteq U\}| \\
&=\sum_{i=0}^m\sum_{\substack{M\in\cC\\ \rank(M)=i}} |\{U\subseteq R^m| \rank(U)=s, \CS(M)\subseteq U\}|\\
&=\sum_{i=0}^m\sum_{\substack{M\in\cC\\ \rank(M)=i}} \stirling{m-i}{s-i} =\sum_{i=0}^m A_i \stirling{m-i}{s-i} =\sum_{i=0}^m A_i \stirling{m-i}{m-s}.
\end{align*}
On the other hand,
\[
|A(\cC,s)|=\sum_{\substack{U\subseteq R^m\\\rank(U)=s}}\{M\in\cC| \CS(M)\subseteq U\}=\sum_{\substack{U\subseteq R^m\\\rank(U)=s}}|\cC\cap\operatorname{Mat}_U(m\times n, R)|.
\]
Therefore, we have shown that \eqref{equality in Ravagnani Lemma 30} holds, as claimed.
\end{proof}

The next result may be regarded as a generalization of \cite[Proposition 4]{GadouleauYanwithproofs} and of \cite[Theorem 31]{ravagnani}; it relates the binomial moments of the rank distributions of $\mathcal{C}$ and $\mathcal{C}^{\perp}$.

\begin{teo}[Binomial moments for the rank distribution]\label{our MacWilliams identity}
Let $R$ be a  chain ring, and let $\cC\in M_{m,n} (R)$ be a rank metric code. Moreover, let $(A_i)$ and $(B_i)$ be the weight distributions of $\cC$ and $\cC^\perp$ respectively. Then, for any $0\leq\nu\leq n,$ we have
\[
\sum_{i=0}^{n-\nu} A_i \stirling{n-i}{\nu} = \dfrac{|\cC|}{|R|^{m \nu} }\sum_{j=0}^{\nu} B_j \stirling{n-j}{\nu-j}.
\]
\end{teo}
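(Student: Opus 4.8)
The plan is to imitate Ravagnani's double-counting scheme, now feeding in the chain-ring ingredients established above. Consider the set
\[
A := \{(U, X) \ : \ U \subseteq R^n \text{ free of rank } n-\nu, \ X \in \cC, \ U \subseteq \ker(X)\},
\]
and count $|A|$ in two different ways. Summing first over $U$, we get immediately
\[
|A| = \sum_{\substack{U \subseteq R^n \text{ free}\\ \rank(U) = n-\nu}} |\cC_U|,
\]
so that, by the definition of the binomial moment and the fact that the number of free rank-$(n-\nu)$ submodules of $R^n$ is $\stirling{n}{\nu}$ (Lemma \ref{counting free submodules in an Artinian local ring}), this equals $B_\nu(\cC) + \stirling{n}{\nu}$. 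Alternatively, summing first over $X \in \cC$: if $\rank(X) = i$, then $\RS(X)$ is free of rank $i$ by Proposition \ref{over a chain ring the rank is as in vector spaces}, and $\ker(X) = \RS(X)^\perp$ is free of rank $n - i$ by Theorem \ref{over a chain ring free modules is a serre category}; the number of free rank-$(n-\nu)$ submodules $U$ with $U \subseteq \ker(X)$ is then $\stirling{n-i}{\nu}$ by Lemma \ref{Lemma 29 Rav} applied inside $\ker(X) \cong R^{n-i}$. This yields $|A| = \sum_{i=0}^{n} A_i \stirling{n-i}{\nu}$, and since $\stirling{n-i}{\nu} = 0$ for $i > n - \nu$, the left-hand side of the claimed identity equals $B_\nu(\cC) + \stirling{n}{\nu}$.

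It therefore remains to show that the right-hand side also equals $B_\nu(\cC) + \stirling{n}{\nu}$. Here I would pass to the dual code. The key input is Lemma \ref{Lemma 28 ravagnani: correct proof}, which gives, for each free $U$ of rank $n-\nu$,
\[
|\cC_U| = \frac{|\cC| \, |\cC^{\perp}_{U^{\perp}}|}{|R|^{m(n-\nu)}},
\]
noting that as $U$ ranges over all free submodules of $R^n$ of rank $n-\nu$, the orthogonal complement $U^\perp$ ranges over all free submodules of rank $\nu$ (using Lemma \ref{first Ravagnani calculations}, or rather the non-degeneracy of the trace form on $R^n$, to see that $(-)^\perp$ is an involution on free submodules with $\rank(U^\perp) = n - \rank(U)$). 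Summing the displayed identity over all such $U$ and using $\sum_U (|\cC_U| - 1) = B_\nu(\cC)$ once more,
\[
B_\nu(\cC) + \stirling{n}{\nu} = \frac{|\cC|}{|R|^{m(n-\nu)}} \sum_{\substack{V \subseteq R^n \text{ free}\\ \rank(V) = \nu}} |\cC^{\perp}_V|.
\]
Now I would evaluate $\sum_{\rank(V) = \nu} |\cC^\perp_V|$ by relating it to Lemma \ref{this is Lemma 30 in Ravagnani}. By Lemma \ref{row space can be viewed as a shortening} (applied to $\cC^\perp$), $\cC^\perp_V = \cC^\perp \cap M_V$ where $M_V = \{X : V \subseteq \ker(X)\}$; transposing via Lemma \ref{adjlem} identifies this with $(\cC^\perp)^t \cap \Mat_{V^\perp}(n \times m, R)$, a "column-space" shortening of the adjoint code of dimension parameters $n \times m$. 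Writing $V = W^\perp$ with $W$ free of rank $n - \nu$ in $R^n$, the sum over $V$ of rank $\nu$ becomes a sum over $W$ of rank $n-\nu$, i.e. over subspaces of $R^n$ of rank $(n-\nu)$; applying Lemma \ref{this is Lemma 30 in Ravagnani} to the code $(\cC^\perp)^t \subseteq M_{n,m}(R)$ with the parameter $s = n-\nu$ (so $m-s$ there becomes $n - (n-\nu) = \nu$ in the present ambient dimension $n$) gives
\[
\sum_{\substack{V \subseteq R^n \text{ free}\\ \rank(V) = \nu}} |\cC^{\perp}_V| = \sum_{j=0}^{n} B_j \stirling{n-j}{\nu - j},
\]
where I have used that the rank distribution of $(\cC^\perp)^t$ coincides with $(B_j)$, the rank distribution of $\cC^\perp$, because transposition preserves rank by Proposition \ref{over a chain ring the rank is as in vector spaces}. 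Since $\stirling{n-j}{\nu-j} = 0$ for $j > \nu$, the outer sum stops at $j = \nu$, and combining the last two displays proves the theorem; the factor $|R|^{m(n-\nu)}$ versus the stated $|R|^{m\nu}$ is reconciled by the identity $\stirling{n-j}{\nu-j}$ already absorbing a power-of-$|\mathfrak{m}|$ normalization — more precisely, one checks $\stirling{n}{\nu} = |R|^{m\nu}/|R|^{m(n-\nu)} \cdot (\text{correction})$... actually the bookkeeping of the $|R|$-powers is exactly where care is needed: one must track that $|\Mat_{V^\perp}(n\times m,R)| = |R|^{m(n-\nu)}$ cancels correctly against $|R|^{m(n-\nu)}$ in the denominator, leaving the stated $|R|^{m\nu}$ only after re-indexing, so I would verify the exponent arithmetic against the $\nu = 0$ and $\nu = n$ boundary cases.

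The main obstacle is precisely this last point: the indexing gymnastics between $\cC$-side sums (ambient $R^n$, shortening by rank-$(n-\nu)$ subspaces, "kernel" version) and $\cC^\perp$-side sums (ambient $R^n$, but naturally phrased via the transposed code in $M_{n,m}(R)$ and "column-space" shortenings $\Mat_U$), together with getting the normalizing power of $|R|$ exactly right. All the structural facts needed — freeness of kernels and quotients (Theorem \ref{over a chain ring free modules is a serre category}), $|M| = |M^*|$ (Theorem \ref{the inequality of cardinals holds for chain rings}), the submodule counts (Lemmas \ref{counting free submodules in an Artinian local ring}, \ref{Lemma 29 Rav}, \ref{this is Lemma 30 in Ravagnani}), and the duality $|\cC_U| = |\cC||\cC^\perp_{U^\perp}|/|R|^{m(n-\nu)}$ (Lemma \ref{Lemma 28 ravagnani: correct proof}) — are already in place, so no new conceptual input is required; the work is entirely in assembling them in the right order and checking the arithmetic of exponents, for which the boundary cases $\nu = 0$ (trivial) and $\nu = n$ (both sides reduce to $|\cC| = |\cC|$ after using Lemma \ref{first Ravagnani calculations} and Eq. \eqref{nondeg}) serve as useful sanity checks.
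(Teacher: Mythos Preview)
Your overall scheme is the same as the paper's, but there is a concrete indexing error in your very first double count that is the actual source of the unresolved $|R|$-exponent mismatch you flag at the end.

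Specifically: you take $U$ free of rank $n-\nu$ and then assert that, for $X$ of rank $i$, the number of such $U$ with $U\subseteq\ker(X)\cong R^{\,n-i}$ is $\stirling{n-i}{\nu}$. It is not. By Lemma \ref{counting free submodules in an Artinian local ring} that count is $\stirling{n-i}{\,n-\nu}$, which by the symmetry $\stirling{k}{k'}=\stirling{k}{k-k'}$ equals $\stirling{n-i}{\,\nu-i}$, \emph{not} $\stirling{n-i}{\nu}$. So your first display does not produce the left-hand side of the theorem; it produces $\sum_i A_i\stirling{n-i}{\nu-i}$ instead. This mis-step is exactly why you later find the denominator $|R|^{m(n-\nu)}$ instead of the stated $|R|^{m\nu}$ and cannot reconcile them: the two sides are genuinely off by an exchange $\nu\leftrightarrow n-\nu$.

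The fix is a one-line change: double count with $U$ free of rank $\nu$ (not $n-\nu$). Then the first count gives $\sum_{\rank(U)=\nu}|\cC_U|=\sum_i A_i\stirling{n-i}{\nu}$, which is the desired left-hand side; Lemma \ref{Lemma 28 ravagnani: correct proof} applied with $U$ of rank $\nu$ gives $|\cC_U|=|\cC|\,|\cC^{\perp}_{U^{\perp}}|/|R|^{m\nu}$ with the correct exponent; and as $U^{\perp}$ runs over free submodules of rank $n-\nu$, the same double count applied to $\cC^{\perp}$ yields $\sum_{\rank(V)=n-\nu}|\cC^{\perp}_V|=\sum_j B_j\stirling{n-j}{\,n-\nu}=\sum_j B_j\stirling{n-j}{\,\nu-j}$, using the symmetry once more. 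This is precisely the paper's argument (the paper phrases it via $\cC\cap\Mat_U$, which by Lemma \ref{row space can be viewed as a shortening} equals $\cC_{U^{\perp}}$, so the two formulations differ only by relabelling $U\leftrightarrow U^{\perp}$). Your detour through $(\cC^{\perp})^t$ and Lemma \ref{adjlem} is then unnecessary: the second sum can be evaluated by repeating the first double count verbatim for $\cC^{\perp}$.
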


\begin{proof}
First, applying Lemma \ref{this is Lemma 30 in Ravagnani} with $s=n-\nu,$ we have 
\[
\sum_{\substack{U\subseteq R^n\\ \rank(U)=n-\nu}}\lvert\cC\cap \Mat_U(m\times n, R)\rvert= \sum_{i=0}^n A_i \stirling{n-i}{\nu}.
\]
Now, the orthogonality assignment $U\rightarrow  U^\perp$ establishes a bijection between $\rank$ $\nu$ and $\rank$ $n-\nu$ submodules in $R^n$. In this way, we have
\[
\sum_{\substack{U\subseteq R^n\\ \rank(U)=n-\nu}}|\cC^{\perp}\cap \Mat_U^{\perp}(m\times n, R)| =\sum_{\substack{U\subseteq R^n\\ \rank(U)=\nu}}|\cC^{\perp}\cap \Mat_U(m\times n, R)|= \sum_{j=0}^n B_j \stirling{n-j}{n-\nu},
\]
where the last equality follows from Lemma \ref{this is Lemma 30 in Ravagnani} with respect to $\cC^{\perp}$ and $s=\nu.$

Now, using Lemma \ref{Lemma 28 ravagnani: correct proof} with $s=n-\nu,$ we have
$$\sum_{i=0}^n A_i \stirling{n-i}{\nu}=\dfrac{|\cC|}{|R|^{m\nu}}\sum_{j=0}^n B_j\stirling{n-j}{\nu-j},$$
and since for $i>n-\nu, j>\nu, \stirling{n-i}{\nu}=\stirling{n-j}{\nu-j}=0 $ we have our desired result.
\end{proof}
One elementary consequence of Theorem \ref{our MacWilliams identity} is the following statement, which recovers and extends \cite[Corollary 33]{ravagnani}.

\begin{cor}\label{Corollary 33 in Ravagnani}
Let $\cC$ be an $[m\times n,k,d]$ code over a  chain ring $R,$ and let $(A_i)_i,$ $(B_j)_j$ be respectively the rank distributions of $\cC$ and $\cC^{\perp}.$ Given $0\leq\nu\leq n,$ set
\[
a(\nu,n):=\dfrac{\lvert R\rvert^{m \nu}}{\lvert\cC\rvert}\sum_{i=0}^{n-\nu}A_i\stirling{n-i}{\nu}.
\]
Then, the $B_j$'s are given by the recursive formula
\[
B_0=1,\quad B_{\nu}=a(\nu,n)-\sum_{j=0}^{\nu-1}B_j \stirling{n-j}{\nu-j}\text{ if }1\leq\nu\leq n,\quad B_{\nu}=0,\text{ if }\nu>n.
\]
\end{cor}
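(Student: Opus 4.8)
The plan is to read the corollary off Theorem \ref{our MacWilliams identity} by isolating the top summand of the right-hand side. First I would rewrite that theorem, after multiplying both sides by $|R|^{m\nu}/|\cC|$, in the form
\[
a(\nu,n)=\sum_{j=0}^{\nu}B_j\stirling{n-j}{\nu-j},
\]
valid for every $0\le\nu\le n$; here the truncation $\sum_{i=0}^{n-\nu}$ appearing in the definition of $a(\nu,n)$ is harmless because $\stirling{n-i}{\nu}=0$ once $i>n-\nu$, so it agrees with $\sum_{i=0}^{n}A_i\stirling{n-i}{\nu}$.

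Next I would dispose of the base case $\nu=0$. Since $\stirling{n-i}{0}=|\mathfrak{m}_R|^{0}\binomii{n-i}{0}_q=1$ for all $0\le i\le n$, and $\sum_{i=0}^{n}A_i=|\cC|$, we get $a(0,n)=|\cC|^{-1}\sum_{i=0}^{n}A_i=1$; on the right-hand side only the $j=0$ term survives and $\stirling{n}{0}=1$, so $B_0=1$. For $1\le\nu\le n$ I would split off the $j=\nu$ summand from the displayed identity: because $\stirling{n-\nu}{\nu-\nu}=\stirling{n-\nu}{0}=1$ by the same evaluation, the identity reads $a(\nu,n)=B_\nu+\sum_{j=0}^{\nu-1}B_j\stirling{n-j}{\nu-j}$, which rearranges at once to the asserted recursion $B_\nu=a(\nu,n)-\sum_{j=0}^{\nu-1}B_j\stirling{n-j}{\nu-j}$.

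Finally, the terminal condition $B_\nu=0$ for $\nu>n$ is not contained in the moment identity and must be argued separately, but it is elementary: every $Y\in M_{m,n}(R)$ has $\RS(Y)\subseteq R^n$, so by Proposition \ref{over a chain ring the rank is as in vector spaces} one has $\rank(Y)=\mu_R(\RS(Y))\le n$, whence $W_\nu(\cC^{\perp})=0$, i.e. $B_\nu=0$, for $\nu>n$.

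I do not expect any real obstacle: the statement is a bookkeeping consequence of Theorem \ref{our MacWilliams identity}. The only points needing the slightest care are the evaluation $\stirling{n-\nu}{0}=1$, which is what makes $B_\nu$ appear with coefficient $1$ so that the recursion actually closes, the identity $\sum_i A_i=|\cC|$ used for the base case, and the separate rank bound $\rank(Y)\le n$ that forces the vanishing beyond $\nu=n$.
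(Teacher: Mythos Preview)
Your argument is correct and is precisely the natural derivation: the paper itself gives no proof of this corollary, merely labeling it ``an elementary consequence'' of Theorem~\ref{our MacWilliams identity}, and your isolation of the $j=\nu$ summand (using $\stirling{n-\nu}{0}=1$) together with the separate rank bound for $\nu>n$ is exactly the intended bookkeeping.
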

Finally, we prove, as anounced, that the MRD character is preserved by duality:

\begin{teo}\label{MRD codes are preserved by orthogonality}
If $\cC$ is an MRD code, then so is $\cC^{\perp}.$
\end{teo}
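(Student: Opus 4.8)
The plan is to imitate Ravagnani's argument for finite fields, using the binomial-moment identity of Theorem~\ref{our MacWilliams identity} together with the Singleton bound of Theorem~\ref{Singleton bound for the rank metric over chain rings} and Eq.~\eqref{nondeg}. Recall that, after possibly transposing (which by Remark~\ref{rowscolumns} changes neither rank nor cardinality), we may assume $m\le n$, so the Singleton bound reads $\rank(\cC)\le m(n-d+1)$, and $\cC$ is MRD precisely when $\rank(\cC)=m(n-d+1)$, equivalently $|\cC|=|R|^{m(n-d+1)}$. Write $d=d(\cC)$ and $d^{\perp}=d(\cC^{\perp})$; by Eq.~\eqref{nondeg} we have $\rank(\cC^{\perp})=mn-\rank(\cC)=m(d-1)$, so proving $\cC^{\perp}$ is MRD amounts to showing $d^{\perp}=n-m(d-1)/m+1$... more precisely that the Singleton bound for $\cC^{\perp}$, namely $\rank(\cC^{\perp})\le m(n-d^{\perp}+1)$, is met with equality, i.e. $d^{\perp}=d-1+... $; the clean way is to show $d^{\perp}\ge n-m(\,\cdot\,)$ is forced. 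Concretely, I will show that if $\cC$ is MRD with minimum distance $d$, then $A_i=W_i(\cC)$ is completely determined for $0\le i\le n$ by the moment identities, and in particular the $B_j$ computed from Corollary~\ref{Corollary 33 in Ravagnani} vanish for $1\le j\le n-\rank(\cC^{\perp})/m = d-1$ and are forced to be positive at $j=d$, which pins down $d^{\perp}=d-1+1$... i.e. $d(\cC^{\perp})=n-\rank(\cC^{\perp})/m+1$, the Singleton equality for $\cC^{\perp}$.

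Here are the steps in order. First, record that $\cC$ MRD means $|\cC|=|R|^{m(n-d+1)}$ and dually $|\cC^{\perp}|=|R|^{m(d-1)}$. Second, apply Theorem~\ref{our MacWilliams identity} for the values $\nu=n-d+1,\dots,n$: for these $\nu$ one has $n-\nu\le d-1$, so on the left-hand side the only surviving term is $i=0$ with $A_0=1$, giving
\[
\stirling{n}{\nu}=\frac{|\cC|}{|R|^{m\nu}}\sum_{j=0}^{\nu}B_j\stirling{n-j}{\nu-j}.
\]
Third, run this as a triangular linear system in the unknowns $B_0,\dots,B_n$ exactly as in Corollary~\ref{Corollary 33 in Ravagnani}: starting from $B_0=1$ and $\nu=1$ upward, each equation determines $B_\nu$ uniquely. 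The key computation is that for $1\le\nu\le d-1$ the resulting value is $B_\nu=0$; this is the same identity on $q$-binomial (Gaussian) coefficients and the factors $\{j\}=\stirling{j}{1}$ used by Ravagnani, since Lemma~\ref{counting free submodules in an Artinian local ring} shows $\stirling{k}{k'}$ factors as $\{k\}!/(\{k'\}!\{k-k'\}!)$ just as in the field case, and the extra powers of $|\mathfrak{m}|$ are absorbed consistently into $|\cC|/|R|^{m\nu}$ (here $|R|=|\mathfrak{m}|^{?}$... one checks $|R|=q^{\ell}$ with $|\mathfrak m|=q^{\ell-1}$ is not needed; what is needed is only the internal consistency of the recursion, which holds verbatim). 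Fourth, observe that $d(\cC^{\perp})=\min\{j:B_j\ne 0\}$; since $B_1=\dots=B_{d-1}=0$ we get $d^{\perp}\ge d$, hence $d^{\perp}\ge d$ forces, via the Singleton bound $\rank(\cC^{\perp})\le m(n-d^{\perp}+1)$ and $\rank(\cC^{\perp})=m(d-1)$, that $d^{\perp}\le d$; combining, $d^{\perp}=d$ is wrong dimensionally—rather one gets $m(d-1)=\rank(\cC^{\perp})\le m(n-d^{\perp}+1)$ gives $d^{\perp}\le n-d+2$, while $d^{\perp}\ge d$... the correct bookkeeping is that $\rank(\cC^{\perp})=m(n-(n-d+1)+1)=$ exactly the Singleton value for minimum distance $n-d+2$ is not it either. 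Let me state it cleanly: since $\cC^{\perp}$ has $\rank(\cC^{\perp})=mn-m(n-d+1)=m(d-1)$, its Singleton bound gives $m(d-1)\le m(n-d^{\perp}+1)$, i.e. $d^{\perp}\le n-d+2$; and from $B_\nu=0$ for $\nu\le d-1$ we have $d^{\perp}\ge d$. To close the gap and get equality in Singleton for $\cC^{\perp}$, I instead show directly that $B_\nu=0$ for all $\nu$ with $1\le\nu\le n-d^{\perp}+1$ would be automatic, and that the recursion actually forces $B_{n-(d-1)+?}$...

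The cleanest route, which I will follow, avoids the above juggling: apply Theorem~\ref{our MacWilliams identity} with the roles of $\cC$ and $\cC^{\perp}$ interchanged, i.e. use $\sum_{i=0}^{n-\nu}B_i\stirling{n-i}{\nu}=\frac{|\cC^{\perp}|}{|R|^{m\nu}}\sum_{j=0}^{\nu}A_j\stirling{n-j}{\nu-j}$, and plug in $\nu$ in the range $[\,n-(d-1)+1,\,n\,]=[n-d+2,n]$; then on the right only $j=0$ survives (since $j\le\nu\le n$ but $A_j=0$ for $1\le j\le d-1$ and $\nu-j$ can still be positive—one must also check $\stirling{n-j}{\nu-j}=0$ for $1\le j\le d-1$, which holds because $\nu-j\ge n-d+2-(d-1)=n-2d+3$, not obviously zero). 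So the genuinely efficient argument is the symmetric one: the $(A_i)$ determine the $(B_j)$ uniquely by Corollary~\ref{Corollary 33 in Ravagnani}, and conversely the $(B_j)$ determine the $(A_i)$; feeding the MRD weight distribution of $\cC$ (which by the first recursion is itself forced: $A_i$ for $0\le i\le n$ equals the field MRD value, with $|R|$ in place of $q^m$ appropriately) into the inverse recursion yields the MRD weight distribution for $\cC^{\perp}$, in particular $B_j=0$ for $1\le j\le n-\rank(\cC^{\perp})/m = d-1$ and $B_d>0$, whence $d(\cC^{\perp})=d$ is... again dimension. I will simply verify numerically via the two coupled recursions that $d(\cC^{\perp})$ attains the Singleton value $n-\rank(\cC^{\perp})/m+1$.

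The main obstacle is the last step: checking that the $q$-binomial identities which, over a field, show the Ravagnani/Gadouleau–Yan MRD weight distribution satisfies the moment relations, continue to hold with the substitution $q^m\rightsquigarrow|R|$ and with $\stirling{k}{k'}$ in place of the Gaussian binomial. Because $\stirling{k}{k'}=|\mathfrak m|^{k'(k-k')}\binomii{k}{k'}_q$ carries the extra factor $|\mathfrak m|^{k'(k-k')}$, one must track these powers through the alternating sums in Corollary~\ref{Corollary 33 in Ravagnani} and confirm they telescope correctly against the $|R|^{-m\nu}$ prefactor; this is a finite but slightly delicate bookkeeping exercise, and is the only place where ``$R$ a ring, not a field'' could in principle cause trouble. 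Everything else—triangularity of the recursion, $A_0=B_0=1$, vanishing of high $q$-binomials—is formal.
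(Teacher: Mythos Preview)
Your proposal circles the right ingredients (Theorem~\ref{our MacWilliams identity} plus the Singleton bound) but misses the single clean step that makes the argument work, and the alternative route you sketch has a genuine gap.

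The paper's proof uses \emph{one} value of $\nu$, namely $\nu=n-d+1$. For this $\nu$ the left side of Theorem~\ref{our MacWilliams identity} collapses to $\stirling{n}{n-d+1}$ because $A_1=\dots=A_{d-1}=0$, and---crucially---the prefactor $|\cC|/|R|^{m\nu}$ equals $1$ exactly, since $\cC$ is MRD. Hence
\[
0=\sum_{j=1}^{n-d+1}B_j\stirling{n-j}{n-d+1-j},
\]
and because every $B_j\ge 0$ and every $\stirling{n-j}{n-d+1-j}>0$ in that range, all of $B_1,\dots,B_{n-d+1}$ vanish \emph{simultaneously}. That gives $d^{\perp}\ge n-d+2$, which combined with Singleton on $\cC^{\perp}$ (where $\rank(\cC^{\perp})=m(d-1)$) is equality. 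You never invoke the nonnegativity of the $B_j$, and you never observe that the prefactor becomes $1$; without these two observations there is no way to kill all the $B_j$ at once.

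Your alternative---running the triangular recursion of Corollary~\ref{Corollary 33 in Ravagnani} to solve for the $B_j$ one at a time---does not close. The recursion for $B_\nu$ with small $\nu$ involves $a(\nu,n)$, which in turn depends on \emph{all} $A_i$ with $i\le n-\nu$, including $A_d,A_{d+1},\dots$, which you do not know. You attempt to patch this by saying the full MRD weight distribution of $\cC$ is ``itself forced'' by the recursion, but that determination already requires knowing enough vanishing of the $B_j$---the reasoning is circular. Moreover, the intermediate claim you write down, that the recursion yields $B_\nu=0$ for $1\le\nu\le d-1$, is (as you yourself notice) far too weak: it only gives $d^{\perp}\ge d$, which combined with $d^{\perp}\le n-d+2$ from Singleton leaves a gap unless $2d=n+2$. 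The bookkeeping with powers of $|\mathfrak m|$ you flag as the ``main obstacle'' is a red herring; it never arises in the correct argument.
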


\begin{proof}
Let $\cC\subseteq M_{m,n}(R)$ be an MRD code with $0<\rank(\cC)<mn, m\leq n$, $d=d_{\cC}=$minimum rank, $|\cC|=|R|^{m(n-d+1)}$.
Let $(A_i)$ and $(B_i)$ be the rank distributions of $\cC$ and $\cC^{\perp}$ respectively. $A_0=B_0=1$ and $A_i=0$ for $1\leq i\leq d-1$. By Theorem \ref{our MacWilliams identity} with $\nu=n-d+1$ we have
\[
\stirling{n}{n-d+1}=\stirling{n}{n-d+1}+\sum_{j=1}^{n-d+1} B_j \stirling{n-j}{n-d+1-j}
\]
Since $d\geq 1$ for $1\leq j\leq n-d+1$ then $n-1\geq n-d+1-j\geq 0$ and so $\stirling{n-j}{n-d+1-j}>0$ and therefore it must be that $B_j=0$ for $1\leq j\leq n-d+1$ which implies, denoting by $d^{\perp}$ the distance of $\cC^{\perp},$ that $d^{\perp}\geq n-d+2$. By Theorem \ref{Singleton bound for the rank metric over chain rings}, $\cC^{\perp}$ is MRD.
\end{proof}

\section{Conclusions}
In Section \ref{section: coefficient MacWilliams} we have proved the family of MacWilliams identities relating the $q$-binomial moments of every rank-metric code over a finite chain ring and its dual, extending the corresponding result by Ravagnani for rank-metric codes over finite fields. Our proof, like Ravagnani's exploits the non-degeneracy of the trace form, what we had to prove making use of Theorem \ref{the inequality of cardinals holds for chain rings}, a highly non-trivial result of commutative algebra. As a side product, we have proved (again for rank-metric codes over finite chain rings) that a code is MRD if and only if its dual is so. Unfortunately, it is far from clear how to generalise the MacWilliams identity in form of a MacWilliams transform of the original code, in the spirit of Gadouleau-Yan. We leave this open topic for future (ongoing) work, as well as to the use of both formulas to obtain a functional equation for the corresponding zeta function, also an ongoing work.

\end{document}